%% file: manuscript_v18_arxiv.tex
\documentclass[journal,letterpaper,twocolumn,final,10pt]{IEEEtran}

\usepackage{amsthm}
\usepackage{amsmath}
\usepackage{amsfonts}
\usepackage{dsfont}
\usepackage{mathrsfs}
\usepackage{pifont}
\usepackage{amssymb}
\usepackage{verbatim}
\usepackage{upgreek}
\usepackage{color}

\usepackage{footmisc}
\usepackage[final]{graphicx}

\usepackage{subcaption}

\usepackage{algorithmic}
\usepackage{algorithm}
\usepackage{epsfig}
\usepackage{eucal}
\usepackage{cite}

\usepackage{balance}
\usepackage{setspace}
\usepackage{siunitx}
\usepackage{gensymb}
\usepackage{enumerate}

\usepackage{array}
\usepackage{textcomp}
\usepackage{stfloats}
\usepackage{url}
\usepackage{multirow}
\usepackage{booktabs}

\input{Defs2}

\begin{document}

\title{Subspace-based Second-Order Detection Architectures for Radar Signals
in the Presence of Unstructured Interference}

\author{Nicola Acito, \IEEEmembership{Member, IEEE}, Danilo Orlando, \IEEEmembership{Senior Member, IEEE},
and Giuseppe Ricci, \IEEEmembership{Senior Member, IEEE}
\thanks{The work of N. Acito and D. Orlando was supported in part by the Italian Ministry of 
Education and Research (MUR) in the framework of the FoReLab project (Departments of Excellence).}
\thanks{Nicola Acito and Danilo Orlando are with Department of Information Engineering, University of Pisa,
Via Caruso 16, 56122 Pisa, Italy. E-mail: {\tt nicola.acito@unipi.it; danilo.orlando@unipi.it}.}  
\thanks{Giuseppe Ricci is with Dipartimento di Ingegneria dell'Innovazione, Università del Salento,
Via Monteroni snc, 73100 Lecce, Italy. E-mail: {\tt giuseppe.ricci@unisalento.it}.}        
}

\maketitle

\begin{abstract}
This paper proposes a subspace-based decision scheme to detect  range-spread targets  
in presence of thermal noise, clutter, and intentional interference.
The target and intentional interference (i.e., noise-like jammer) effects are confined to the covariance matrix of the received
primary data vectors. The jammer is also present
in the secondary data vectors necessary to estimate the clutter covariance matrix.
Finally, an additional set of training
data, acquired by the system in passive mode and, hence, not contaminated by 
target and clutter echoes,  is used to avoid the  jammer parameters are unidentifiable. 
The detection problem is addressed by deriving an architecture based upon the likelihood ratio test
where the unknown second-order parameters are suitably estimated.
To this end, a unified theoretical framework is defined also providing
a geometrical interpretation of the estimation procedures.
The performance assessment, conducted also in comparison to a natural competitor,  highlights the effectiveness of the proposed
solution.

\end{abstract}

\begin{IEEEkeywords}
Adaptive detection, identifiability, interference, maximum likelihood estimation, likelihood ratio test, 
noise-like jammer, radar, second-order model, subspace model.
\end{IEEEkeywords}

\section{Introduction}
\label{Sec:Introduction}
The subspace model for target and interference has met a large approval in the context
of adaptive radar detection over the last decades as corroborated by
the unlimited literature record (see, for example, \cite{kelly-tr,matchedSubspace,adaptiveSubspaceDetectors,
ScharfACE,deMaioSubspaceInvariance,BoseStein,BoseStein2,DD,SD,Ciuonzo1,
9591392,10742480,11558562,1396421,9887896}). This model assumes that
the signal of interest visits either a known or an unknown subspace of the observables. 
The motivation behind this wide success is due to the fact that in practice
target signature cannot be perfectly known and an uncertainty arises. 
The causes of this uncertainty are manifold and range
from hardware limitations to miscalibration errors and unbalanced channels. The subspace
model makes it possible to capture most of the energy loss that occurs 
when the nominal and the actual signatures of the
target are not perfectly aligned thereby mitigating 
the performance degradation of the conventional approaches.
As a matter of fact, most of conventional decision schemes 
compare a measure of the received energy along
the presumed target signature with a threshold or with an estimate of 
the entire received energy. This comparison
can be accomplished in the original observation space or 
after whitening data by means of an estimate of the disturbance
covariance matrix. It is clear that, for 
high signal-to-interference-plus-noise ratio values, when 
the actual target signature is different from 
the nominal one, the norm of
the component of the received data along the nominal 
target direction is lower than the norm
of the component along the actual target direction.

As stated before, in the open literature, there exist many excellent contributions on subspace detection. 
However, in what follows, for brevity, we review some 
of them with the aim of introducing and motivating 
the approach proposed in this paper and we apologize if we 
failed to mention any other important contribution.
To begin with, we recall the seminal subspace-based 
framework developed in \cite{kelly-tr}, where
the so-called generalized multivariate analysis of 
variance (GMANOVA) is applied to adaptive radar detection
by extending it to complex quantities. This model accounts for multiple snapshots 
and in addition to the subspace for the target steering vector, 
a subspace for target responses is considered.
Observe that many conventional detection problems can be 
considered as special cases of the GMANOVA model.
In \cite{matchedSubspace}, a general class of problems concerning the detection
of signals belonging to a known subspace is formulated 
and solved through the generalized likelihood ratio test (GLRT).
The interference is assumed to contribute not only to 
the second-order statistics but also 
to the first-order statistics of the received signal. 
Detectors derived in \cite{matchedSubspace} are adapted
to unknown covariance matrices in \cite{adaptiveSubspaceDetectors} including as special cases
well-known decision schemes such as Kelly's GLRT \cite{kelly1986adaptive} and the adaptive 
cosine/coherence estimator \cite{381910,ScharfACE}.
It is worth pointing out that signal model formulated in \cite{matchedSubspace,adaptiveSubspaceDetectors}
comprises a component (first-order parameter) that, in the radar context, is 
representative of structured interference.
The same problem is solved in \cite{deMaioSubspaceInvariance} by using the invariance principle
and a maximal invariant statistic, for a group of transformations that leaves invariant
the hypothesis test, is determined. Then, a class of invariant decision schemes are devised
by exploiting the distribution of the maximal invariant statistic 
and the likelihood ratio test (LRT).
The principle of invariance is also used in \cite{BoseStein} to address the conventional 
detection problem under the GMANOVA model.
Remarkably, it is shown that in the specific case of $1$-dimensional 
signal subspace and assuming that target contaminates
only one snapshot, Kelly's GLRT, the adaptive matched filter \cite{robey1992cfar}, and so-called loss factor 
\cite{BOR-Morgan} are functions of the maximal invariant statistic (see also \cite{BoseStein2}).

The first extension of the subspace approach to the case of range-spread targets and
data contaminated by structured interference can be found in \cite{DD,SD}.
The difference between these two contributions relies in the 
behavior of the target signature
over the range bins. In fact, in \cite{DD}, target components 
in each range bin are characterized by the same 
steering vectors, whereas in \cite{SD}, this constraint 
is removed and the signatures
might change over the range bins under test. In addition, the authors assume that
the structured 
subspace interference does not affect secondary data, 
namely, the set of data assumed free of useful signal components
and used to estimate the clutter spectral properties. Such a 
model fits scenarios where
the coherent jammers are used to deceive the victim radar by 
generating false targets that 
tie up the system's resources with the risk of wasting assets of high strategic value
(notice that in these papers 
the interference is also present under the null hypothesis). 
In this situation, a prospective genuine target might 
be missed \cite{EW101}.

A general framework that comprises the previously described scenarios as special 
cases is presented in \cite{Ciuonzo1},
where a generalization of the GMANOVA is studied and the maximal 
invariant statistic is obtained
with respect to a desirable group of transformations. 
Other important approaches to subspace detection 
can be found in \cite{9591392,10742480,11558562}.
In \cite{9591392}, the signature of the coherent interferer is assumed orthogonal
to the subspace of the signal, whereas in \cite{10742480}, 
a multiple hypothesis test is considered
where each hypothesis corresponds to a given subspace size.
This detection problem is solved by means
of the so-called penalized likelihood ratio test defined in \cite{Pia2021_1}. Finally,
in \cite{11558562}, the GMANOVA model is used to formulate a detection problem
for partially homogeneous scenarios, namely when the clutter 
in the data under test (primary data)
exhibits a different power with respect to the clutter in the 
secondary data. This power difference
is modeled in terms of a different scaling factor between the 
clutter covariance matrix in the primary
data and that in the secondary data.

So far, both useful and possible 
interfering signals are modeled
in terms of the first-order statistics. Such a model is reasonable
for coherent jammers \cite{ABORT_ECM}, but it is not strictly appropriate
in the case of noise-like jammers, whose components emulate the noise fluctuations
and, hence, cannot be considered deterministic. Thus, an approach based on second-order statistics
is required. In this respect, a first
formulation of the subspace detection problem in terms of second order statistics can be found
in \cite{1396421}, where primary data are initially whitened with the 
sample covariance matrix of the secondary data and then
the GLRT is derived. In \cite{9887896}, a comprehensive framework for subspace detection
that includes first-order and second-order subspace models is derived. Moreover, the case where
the signal subspace is unknown is addressed.
However, such second-order contributions do not 
incorporate the intentional interference into the 
signal model at the design stage.
Thus, it become quite evident that a 
subspace-based second-order formulation that includes 
both signal of interest (representative of a fluctuating target \cite{Richards_BP})
and interference (representative of noise-like jammers) is missing in the open literature.

This paper represents an initial attempt to fill the aforementioned 
gap by proposing a subspace-based detector
within the second-order model for both the signal of interest and the intentional interference.
Focusing on range-spread targets, we formulate the detection problem at hand
in terms of a binary hypothesis test
where signal and interference components are confined to the covariance matrix of the received
data vectors. More importantly, for each component
of the overall covariance matrix, we exploit the structure that arises from the
design assumptions. Given the complexity of the problem and to solve
an identifiability issue, we resort to an additional set of training
data acquired by the system in passive mode. As a consequence, such data are not contaminated by 
target and clutter echoes; instead, they are representative of the ubiquitous
thermal noise and of a possible noise-like jammer \cite{DoppioTraining,9444189}.
It is important to point out that the effects of the noise-like jammer are present
not only in the primary data but also in the secondary data due
to the continuous transmission of the wideband noisy signals \cite{antennaBased,stimson2014stimson}.

The proposed formulation concerns the discrimination between the null hypothesis, where
intentional interference, clutter, and noise components are present in data, against the alternative
hypothesis that data also include target echoes. The motivation behind this choice
is that, when the search function is active, the main beam
can be steered along directions far from the target. In this situation, data
are affected by thermal noise, clutter, and noise-like jammer, because the latter
continuously illuminates the victim radar. 
This detection problem is addressed by deriving an architecture based upon the LRT
where the unknown second-order parameters are suitably estimated by means
of a ``sequential'' maximum likelihood approach (a point better explained
in Section \ref{Subsec:LRTdesign}). Interestingly, we provide theoretical
results, including the proof of the aforementioned 
identifiability issue and a geometrical interpretation of the second-order
estimates, that exhibit a more general value and can be applied to analogous
problems in different contexts. From an architectural point of view,
the proposed detector can be framed as follows. The system uses a preliminary
stage to detect the noise-like jammer \cite{Pia2021_2} and, when the latter
is not present, data are processed by a subsequent conventional detection stage.
In the opposite case, data are processed by a stage 
that implements the herein proposed detector.
Finally, the performance assessment highlights the effectiveness of the proposed
solution also in comparison with most natural existing 
competing technique developed in \cite{SD}.

The remainder of this paper is organized as follows. In the next section, we
provide the formal statement of the problem along with the definition
of data distributions and the related parameters. In Section \ref{Sec:Designs},
we provide some general theoretical results that are used to estimate
the unknown distribution parameters and derive the LRT-based detector.
Section \ref{Sec:Identifiability} contains the proof of the identifiability issue
related to some distribution parameters, while in Section \ref{Sec:PerformanceAssessment},
illustrative examples are presented and discussed. Finally, in Section \ref{Sec:Conclusions}
we draw the conclusions and outline future research tracks.

\subsection{Notation} 
In the sequel, vectors and matrices are denoted by boldface lower-case and upper-case letters, respectively. 
Symbols $\det(\cdot)$, $\tr(\cdot)$, $(\cdot)^T$, and  $(\cdot)^\dag$ denote the determinant, trace, 
transpose, and  conjugate transpose, respectively. 
$\C$ is the set 
of complex numbers, and $\C^{N\times M}$ is the Euclidean space of $(N\times M)$-dimensional complex matrices 
(or vectors if $M=1$).
%The imaginary unit is indicated by $j$.
The $(N\times N)$-dimensional diagonal matrix with entries $a_1,\ldots,a_N$
is denoted by $\diag(a_1,\ldots,a_N)$. Given $\bA\in\C^{N\times M}$, its Frobenius norm 
is $\|\bA\|_F$.
The $i$th entry of a vector $\bx$ is denoted by $\bx(i)$.
Symbols $\bI$ and $\bzero$ indicate the identity matrix and the null matrix or vector, 
respectively, whose size depends on the context.
%$\buno_N$ is a vector of size $N$ whose entries are equal to $1$.
For any Hermitian matrix $\bA$, $\bA\succ\bzero$ means that $\bA$ is a positive
definite matrix, while $\bA\succeq\bzero$ means that $\bA$ is a positive semidefinite matrix. 
The acronyms IID and PDF stand for independent and identically
distributed and probability density function, respectively.
Finally, 
if $\bX\in\C^{N\times K}$, 
$\bX\sim\cC\cN_{N,K}(\bm, \bSigma,\bI)$ means that the columns of $\bX$ are 
$K$ IID $N$-dimensional random vectors 
following the circularly invariant
Gaussian distribution with mean $\bm$ and covariance matrix $\bSigma\succ\bzero$.

\section{Problem Formulation}
\label{Sec:problemFormulation}
Let us consider a multichannel radar system with $N$ channels and 
denote by $\bZ_P=[\bz_{p,1},\ldots,\bz_{p,K_P}]\in\C^{N\times K_P}$ the
primary data matrix whose columns are statistically independent. Moreover, let us
assume that the system acquires a set of secondary data used to estimate
the unknown interference (thermal noise, clutter, and noise-like jamming signals) 
parameters and denote
by $\bZ_S=[\bz_{s,1},\ldots,\bz_{s,K_S}]\in\C^{N\times K_S}$ the respective data matrix.
%whose columns are statistically independent vectors. 
Suppose also that the columns of the  matrix $[ \bZ_P,\bZ_S]$ are independent random vectors.
As stated before, the interference sources are the environment reverberation, which generates
clutter echoes characterized by a positive semidefinite covariance matrix
$\bM_c\in\C^{N\times N}$, 
the electronic devices, which are responsible for thermal noise with
power $\sigma^2_n>0$,
and a noise-like jammer represented by a suitable covariance component that
will be described below. At the design stage, we assume that both target and jammer
steering vectors belong to known subspaces spanned by the columns of 
$\bH\in\C^{N\times r}$ and $\bJ\in\C^{N\times q}$, respectively, with $r+q\leq N$
and such that $[\bH,\bJ]\in\C^{N\times (r+q)}$ is a full-column rank matrix.
Let us recall here that the use of the subspace model is motivated by the possible uncertainty
associated with the beam pointing (due for instance to phase/amplitude miscalibration errors,
imperfections of the single channel beam, and so on). As for the a priori information
related to the jammer subspace, it can be provided by a preliminary stage that listens to the environment to 
detect possible active interferers and estimate their directions of arrival.
As explained in Section \ref{Sec:Introduction}, 
if the presence of a jammer is declared, a detection
stage is activated to discriminate the hypothesis that data contain clutter, jammer, and
thermal noise components from the hypothesis that data are also affected by target components.
Moreover, the objective of the noise-like jammer is 
to blind the victim radar (in the case the latter uses an adaptive threshold) or to generate
distributed clutter that masks the genuine target (when the system cannot
adapt to large disturbance variations). To this end, it continuously illuminates
the victim radar and, hence, its contribution 
affects all the collected data. 

With this said, we formulate
the detection problem at hand as follows
\be
\begin{cases}
H_0:
\begin{cases}
\bZ_P\sim\cC\cN_{N,K_P}(\bzero,\sigma^2_n\bI+\bM_c+\bJ\bQ\bJ^\dag,\bI),
\\
\bZ_S\sim\cC\cN_{N,K_S}(\bzero,\sigma^2_n\bI+\bM_c+\bJ\bQ\bJ^\dag,\bI),
\end{cases}
\\
\vspace{-3mm}
\\
H_1:
\begin{cases}
\bZ_P\sim\cC\cN_{N,K_P}(\bzero,\sigma^2_n\bI+\bM_c
\\
\hspace{24mm} +\bH\bP\bH^\dag+\bJ\bQ\bJ^\dag,\bI),
\\
\bZ_S\sim\cC\cN_{N,K_S}(\bzero,\sigma^2_n\bI+\bM_c+\bJ\bQ\bJ^\dag,\bI),
\end{cases}
\end{cases}
\label{eqn:test1}
\ee
where $\bP\in\C^{r\times r}$ and $\bQ\in\C^{q\times q}$ are unknown positive semidefinite matrices.
The other unknown parameters are $\sigma^2_n$ and $\bM_c$.

In addition to the conventional primary and secondary sets, the radar system
exploits another data set obtained by opening the receiver before transmitting
the operating waveform and beyond the instrumental range. 
The samples of this set are representative of thermal noise 
and jammer. 
In other words, denoting by $\bX=[\bx_1,\ldots,\bx_{K_J}]\in\C^{N\times K_J}$
with $K_J\ge q$ this data matrix (again, we suppose that the columns 
of the  matrix $[\bZ_P,\bZ_S,\bX]$ are independent random vectors),
we assume that
\be
\bX\sim\cC\cN_{N,K_J}(\bzero,\sigma^2_n\bI+\bJ\bQ\bJ^\dag).
\ee
It is important to observe that the above matrix can be used to estimate 
$\bQ$ and $\sigma^2_n$ under both hypotheses.

%%%%As an alternative formulation of the detection problem at hand, we could neglect 
%%%%the structure of the thermal noise covariance matrix and
%%%%incorporate it and clutter component into a unique covariance matrix denote by $\bM$. As a consequence,
%%%%problem \eqref{eqn:test1} can be recast as
%%%%\be
%%%%\begin{cases}
%%%%H_0:
%%%%\begin{cases}
%%%%\bZ_P\sim\cC\cN_{N,K_P}(\bzero,\bM+\bJ\bQ\bJ^\dag,\bI),
%%%%\\
%%%%\bZ_S\sim\cC\cN_{N,K_S}(\bzero,\bM+\bJ\bQ\bJ^\dag,\bI),
%%%%\end{cases}
%%%%\\
%%%%\vspace{-3mm}
%%%%\\
%%%%H_1:
%%%%\begin{cases}
%%%%\bZ_P\sim\cC\cN_{N,K_P}(\bzero,\bM+\bH\bP\bH^\dag+\bJ\bQ\bJ^\dag,\bI),
%%%%\\
%%%%\bZ_S\sim\cC\cN_{N,K_S}(\bzero,\bM+\bJ\bQ\bJ^\dag,\bI),
%%%%\end{cases}
%%%%\end{cases}
%%%%\label{eqn:test2}
%%%%\ee
%%%%where we assume that the eigenvalues of $\bM$ are bounded 
%%%%from below by a system parameter $\xi>0$, which accounts for the power 
%%%%of the thermal noise, namely $\bM - \xi\bI \succeq \bzero$.

Finally, we conclude this section by providing the expressions 
of the PDFs that will be used in the ensuing derivations.
Specifically, under $H_0$ of \eqref{eqn:test1}, 
the PDF of $\bZ_P$ is given by
\begin{multline}
f_0(\bZ_P;\sigma^2_n,\bM_c,\bQ)=\left[\frac{1}{\pi^N \det(\sigma_n^2\bI+\bM_c+\bJ\bQ\bJ^\dag)}\right]^{K_P}
\\
\times \exp\left\{-\tr\left[(\sigma_n^2\bI+\bM_c+\bJ\bQ\bJ^\dag)^{-1}\bZ_P\bZ_P^\dag\right]\right\},
\label{eqn:PDF_ZpZs}
\end{multline}
whereas under $H_1$ of \eqref{eqn:test1}, the PDF of $\bZ_P$ has the following expression
\begin{multline}
f_1(\bZ_P;\sigma^2_n,\bM_c, \bP, \bQ)
\\
=\left[\frac{1}{\pi^N \det(\sigma_n^2\bI+\bM_c+\bH\bP\bH^\dag+\bJ\bQ\bJ^\dag)}\right]^{K_P}
\\
\exp\left\{-\tr\left[(\sigma_n^2\bI+\bM_c+\bH\bP\bH^\dag+\bJ\bQ\bJ^\dag)^{-1}\bZ_P\bZ_P^\dag\right]\right\}.
\label{eqn:PDF_Zp_H1_3}
\end{multline}

The PDF of $\bZ_S$ for both hypotheses of \eqref{eqn:test1} is
\begin{multline}
f(\bZ_S;\sigma^2_n,\bM_c, \bQ)=\left[\frac{1}{\pi^N \det(\sigma_n^2\bI+\bM_c+\bJ\bQ\bJ^\dag)}\right]^{K_S}
\\
\exp\left\{-\tr\left[(\sigma_n^2\bI+\bM_c+\bJ\bQ\bJ^\dag)^{-1}\bZ_S\bZ_S^\dag\right]\right\}.
\label{eqn:PDF_Zs_H1_3}
\end{multline}

As for the PDF of $\bX$, it is given by
\begin{multline}
f(\bX;\sigma^2_n,\bQ)=\left[\frac{1}{\pi^N \det(\sigma^2_n\bI+\bJ\bQ\bJ^\dag)}\right]^{K_J}
\\
\times\exp\left\{-\tr\left[(\sigma^2_n\bI+\bJ\bQ\bJ^\dag)^{-1}\bX\bX^\dag\right]\right\}.
\label{eqn:PDF_X}
\end{multline}

\section{Detector Designs}
\label{Sec:Designs}
In this section, we design a decision scheme for problem \eqref{eqn:test1}
by exploiting the maximum likelihood approach. Specifically, when the maximization of
the joint likelihood of $\bZ_P$ and $\bZ_S$ with respect to the unknown parameters is mathematically involved, 
we consider subproblems that allow us to 
obtain suboptimal (in the sense of the maximum likelihood)
closed-form
expressions of the parameter estimates.
As a preliminary step towards the derivation of the decision scheme, 
we introduce some important technical results.

\subsection{Preliminary results}

First, we formulate the following important 
proposition that will come into play for the design of the detector.

\begin{prop}\label{prop:maxLogDetTr_C_PSD}
Let $\bC\in\C^{M\times M}$ and $\bR\in\C^{M\times M}$ be two Hermitian 
matrices such that $\bC\succeq \bzero$ and $\bR\succeq \bzero$.
Then, the maximizer of the objective function
\be
-\log\det(\bI+\bC)-\tr[(\bI+\bC)^{-1}\bR].
\label{eq:objective_function_prop1}
\ee
with respect to $\bC$ is given by
\be
\widehat{\bC}=\bU \diag(\max\{\lambda_1-1,0\},\ldots,\max\{\lambda_M-1,0\})\bU^\dag,
\ee
where $\lambda_1\ge\lambda_2\ge\ldots,\ge\lambda_M\ge 0$ are the eigenvalues of $\bR$ and
$\bU\in\C^{M\times M}$ is a unitary matrix whose columns are the eigenvectors of $\bR$.
\end{prop}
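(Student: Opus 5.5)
The plan is to change variables to $\bS=\bI+\bC$, so that the feasible set becomes $\{\bS\ \text{Hermitian}:\ \bS\succeq\bI\}$. The objective then reads $g(\bS)=-\log\det(\bS)-\tr(\bS^{-1}\bR)$. The proof reduces the problem to a scalar one on the eigenvalues of $\bS$ and then shows that the optimal eigenvectors coincide with those of $\bR$.

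\emph{Eigen-decomposition of the unknown.} Write $\bS=\bV\diag(s_1,\ldots,s_M)\bV^\dag$ with $\bV$ unitary and, without loss of generality, $s_1\ge\cdots\ge s_M\ge 1$. Then $\log\det\bS=\sum_i\log s_i$, and
\[
\tr(\bS^{-1}\bR)=\tr\bigl(\diag(s_1^{-1},\ldots,s_M^{-1})\bV^\dag\bR\bV\bigr).
\]
For fixed eigenvalues, maximizing $g$ amounts to minimizing this trace over unitary $\bV$. By von Neumann's trace inequality (equivalently, the rearrangement inequality applied to the Schur-concave pairing of eigenvalues), the minimum equals $\sum_i\lambda_i/s_i$. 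It is attained when $\bV=\bU$ and the largest $\lambda_i$ is paired with the largest $s_i$, i.e., the smallest $1/s_i$. This ordering is consistent with the sorted $\lambda$'s and $s$'s.

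\emph{Scalar problems.} The objective now decouples into
\[
\sum_{i=1}^M h_i(s_i),\qquad h_i(s)=-\log s-\lambda_i/s,\qquad s\ge 1.
\]
The derivative is $h_i'(s)=(\lambda_i-s)/s^2$. Hence $h_i$ is increasing on $(0,\lambda_i]$ and decreasing afterwards, so its constrained maximizer is $\hat s_i=\max\{\lambda_i,1\}$. If $\lambda_i=0$, $h_i$ is decreasing on the whole feasible set and the maximizer is $s=1$, so the formula still holds. The resulting $\hat s_i$ are automatically nonincreasing in $i$, which matches the ordering assumed above. Returning to $\bC=\bS-\bI$ gives $\widehat{\bC}=\bU\diag(\max\{\lambda_i-1,0\})\bU^\dag$, which is the claimed expression.

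\emph{Main obstacle.} The delicate step is the trace-minimization argument. One must state the inequality $\tr(\bA\bB)\ge\sum_i a_i^{\downarrow}b_i^{\uparrow}$ for Hermitian $\bA,\bB\succeq\bzero$, and check that the joint optimization over $(\bV,\{s_i\})$ is legitimately split: first bound the objective for arbitrary feasible $\bS$, then exhibit $\widehat{\bC}$ attaining the bound. When eigenvalues of $\bR$ are repeated, $\bU$ is not unique, but every admissible choice yields the same $\widehat{\bC}$, so the maximizer is well defined. As a fallback, one can bypass von Neumann's inequality by a direct argument: $\bR$ and the optimal $\bS$ must commute, shown via the first-order (KKT) conditions on the cone $\bS\succeq\bI$.
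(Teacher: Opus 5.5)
Your proposal is correct and follows essentially the same route as the paper. You eigendecompose the unknown (your $\bS=\bI+\bC$ is the paper's $\bV(\bI+\bGamma)\bV^\dag$), bound $\tr(\bS^{-1}\bR)$ from below by $\sum_i\lambda_i/s_i$ via the Hermitian trace inequality (the paper cites it as Horn--Johnson Thm.~4.3.53) with attainment at $\bV=\bU$, and then maximize the decoupled scalar functions to obtain $\max\{\lambda_i-1,0\}$; your ``bound every feasible $\bS$, then exhibit attainment'' framing is slightly cleaner than the paper's appeal to the equality-iff-commute condition, and the KKT fallback is unnecessary.
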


\begin{proof}
Let us rewrite the objective function as follows
\be
\log\det[(\bI+\bC)^{-1}]-\tr[(\bI+\bC)^{-1}\bR]
\label{eqn:objectiveFunction}
\ee
and notice that the determinant in the first term is independent of the eigenvectors of $\bC$.
As a matter of fact, by using the eigendecomposition of $\bC=\bV\bGamma\bV^\dag$,
with $\bGamma=\diag(\gamma_1,\ldots,\gamma_M)$ 
the matrix of its eigenvalues, $\gamma_1\ge\gamma_2\ge\ldots\ge\gamma_M\ge 0$, and $\bV\in\C^{M\times M}$ the unitary matrix of the
corresponding eigenvectors, the objective function can be written as
\be
\log\det[(\bI+\bGamma)^{-1}]-\tr[\bV(\bI+\bGamma)^{-1}\bV^\dag\bR].
\label{eqn:objectiveFunction1}
\ee
Now, by Theorem 4.3.53 of \cite{horn2012matrix}, we obtain that
\be
-\tr[\bV(\bI+\bGamma)^{-1}\bV^\dag\bR]\le -\sum_{m=1}^M \frac{\lambda_m}{1+	\gamma_m}
\ee
where $\lambda_1\ge\ldots\ge\lambda_M\ge 0$ are the eigenvalues of $\bR$ with the equality
if and only if matrices $\bV(\bI+\bGamma)^{-1}\bV^\dag$ and $\bR$ commute.
Therefore by Theorem 4.1.6 of \cite{horn2012matrix}, the upper bound is achieved if and only if
$\bV(\bI+\bGamma)^{-1} \bV^\dag$ and $\bR$ share the same eigenvectors
(namely, they are simultaneously diagonalizable). As a consequence, $\bV$ is selected to be
the unitary matrix of the eigenvectors of $\bR$ and the objective function can be recast as
\begin{align}
&\log\det[(\bI+\bGamma)^{-1}]-\sum_{m=1}^M \frac{\lambda_m}{1+\gamma_m}
\nonumber
\\
&=\sum_{m=1}^M \log \left(\frac{1}{1+\gamma_m}\right)-\sum_{m=1}^M \frac{\lambda_m}{1+\gamma_m}
\\
&=\sum_{m=1}^M g(\gamma_m;\lambda_m),
\label{eqn:objectiveFunction01}
\end{align}
where $g(\gamma_m;\lambda_m)=\log \left[1/(1+\gamma_m)\right]-\lambda_m/(1+\gamma_m)$.

Finally, we need to maximize the objective function with respect to the $\gamma_m$s with the constraint
$\gamma_m\ge 0$, $\forall m$ (since $\bC\succeq \bzero$). 
To this end, we observe that $g(\gamma_m;\lambda_m)$ is continuous
in $[0,+\infty)$ and
\begin{align}
&-\infty<\lim_{\gamma_m\rightarrow 0}g(\gamma_m;\lambda_m)=-\lambda_m \le 0,
\\
&\lim_{\gamma_m\rightarrow +\infty}g(\gamma_m;\lambda_m)=-\infty.
\end{align}
Moreover, the first derivative of $g(\gamma_m;\lambda_m)$ with 
respect to $\gamma_m$ is
\be
\frac{dg(\gamma_m;\lambda_m)}{d \gamma_m}=-\frac{1}{1+\gamma_m}+\frac{\lambda_m}{(1+\gamma_m)^2}.
\ee
It follows that
the sign of the derivative is always negative if $\lambda_m-1<0$ and, hence, 
$g(\gamma_m;\lambda_m)$ is decreasing. On the other hand, if $\lambda_m-1>0$, then
the derivative is positive when $\gamma_m<\lambda_m-1$ and negative when $\gamma_m>\lambda_m-1$.
It turns out that
\be
\gamma_m=\max\{\lambda_m-1,0\}.
\label{eqn:solutionGamma}
\ee

Summarizing, the estimate of $\bC$ is obtained by setting to zero the entries of $\bLambda-\bI$
that are negative, namely
\be
\widehat{\bC}=\bU
\begin{bmatrix}
\max\{\lambda_1-1,0\} & \ldots & 0
\\
\vdots & \ddots & \vdots
\\
0 & \ldots & \max\{\lambda_M-1,0\}
\end{bmatrix}
\bU^\dag.
\label{eqn:solutionCmatrix}
\ee
\end{proof}

\noindent
{\bf Remark 
(geometrical
interpretation of
Proposition \eqref{prop:maxLogDetTr_C_PSD}).}
Let us consider now, for $\bR \succ \bzero$,
the solution $\bar{\bC}$ of the following optimization problem
\be
\log\det[(\bI+\bC)^{-1}]-\tr[(\bI+\bC)^{-1}\bR]
\label{eqn:objectiveFunction_unconstrained}
\ee
with $\bC$ Hermitian and $(\bI+\bC)\succ \bzero$.
It is worth highlighting that \eqref{eqn:solutionCmatrix} can be viewed as the projection
of $\bar{\bC}$
onto the cone of 
the $(M\times M)$-dimensional positive semidefinite Hermitian
matrices. To prove this fact,
first observe that
\be
\bar{\bC}(\bR)=\bR-\bI.
\ee
In fact,
%the objective function \eqref{eqn:objectiveFunction_unconstrained}
%can be rewritten as
%$$
%\log\det[(\bI+\bC)^{-1}]-\tr[(\bI+\bC)^{-1}\bR].
%$$
the change of variable $\bX=(\bI+\bC)^{-1}\succ \bzero$ leads to
the function
\be
g(\bX)=\log\det(\bX)-\tr(\bX\bR)
\ee
that is concave \cite{boyd2004convex}.
In order to maximize $g(\bX)$ with respect to $\bX$, we proceed by setting 
to zero its first (complex) derivative with respect to $\bX$ \cite{hjørungnes2011complex}
to come up with
\be
\frac{\partial g(\bX)}{\partial \bX}=\bX^{-T}-\bR^T=\bzero
\Rightarrow \bX^{-1}=\bR.
\ee
It clearly follows that the unique value of $\bC$ that maximizes the objective function is
\be
\bar{\bC}(\bR)=\bR-\bI=\bU \left( \bLambda -\bI \right) \bU^\dag
\ee
with $\bU$ and $\bLambda$ the unitary matrix of the eigenvectors of $\bR$ and  the diagonal matrix of the corresponding eigenvalues, respectively.

Now, we recall that the projection of $\bar{\bC}$ is the positive semidefinite
Hermitian matrix $\bC$ that minimizes some distance with respect to $\bar{\bC}$.
For the case at hand, we consider the distance defined by the Frobenius norm, namely
\begin{align}
&\min_{\bC\succeq \bzero} \| \bC-\bar{\bC} \|_F^2 
\nonumber
\\
&=
\min_{\bC\succeq \bzero} \tr[(\bC-\bar{\bC})^\dag(\bC-\bar{\bC})]
\nonumber
\\
&=\min_{\bC\succeq \bzero} \tr[(\bU^\dag\bC\bU-\left(\bLambda-\bI\right))^\dag(\bU^\dag\bC\bU-\left(\bLambda-\bI\right))]
\nonumber
\\
&=\min_{\bB\succeq \bzero} \tr[(\bB-\left(\bLambda-\bI\right))^\dag(\bB-\left(\bLambda-\bI\right))]
\nonumber
\\
&=\min_{\bB\succeq \bzero} \|\bB-\left(\bLambda-\bI\right)\|_F^2
\nonumber
\\
&=\min_{\bB\succeq \bzero} \left[ \sum_{m=1}^M |b_{m,m}-(\lambda_{m}-1)|^2 
+\sum_{m,n=1\atop m\neq n}^M |b_{n,m}|^2 \right]
\end{align}
where $\bB=\bU^\dag\bC\bU$ is still positive semidefinite.
In order to accomplish the last minimization with respect to $\bB$ with the constraint of positive semidefiniteness,
we notice that the second positive term depends on $\bB$ only and can be set to zero. It follows
that $\bB$ must be a diagonal matrix implying that the projection and $\bar{\bC}$ share the same eigenvectors.
Then, $b_{m,m}$ must be an eigenvalue of $\bC$ and, since it is required that $\bB$ is positive semidefinite,
the first term is minimized when
\be
b_{m,m}=\max\{\lambda_m-1,0\},
\ee
that is exactly the expression obtained in \eqref{eqn:solutionGamma}.

Next result generalizes Proposition 1 by replacing $\bI$ with a positive definite matrix $\bA$.
\begin{corollary}\label{cor:maxLogDetTr_C_PSD}
Let $\bB\in\C^{M\times M}$ and $\bR\in\C^{M\times M}$ be two Hermitian 
matrices such that $\bB\succeq\bzero$ and $\bR\succeq\bzero$,
$\bA \in\C^{M\times M}$ be a Hermitian matrix such that $\bA\succ \bzero$,
and consider the following objective function
\[
-\log\det(\bA+\bB) \\
-\tr[(\bA+\bB)^{-1}\bR].
\]
The maximizer of the above function with respect to $\bB$ is given by
\[
\widehat{\bB}=
\bA^{1/2}
\bU \diag(\max\{\lambda_1-1,0\},\ldots,\max\{\lambda_M-1,0\})\bU^\dag \bA^{1/2},
\]
where $\lambda_1\ge\lambda_2\ge\ldots,\ge\lambda_M\ge 0$ are the eigenvalues of $\bA^{-1/2}\bR\bA^{-1/2}$ and
$\bU\in\C^{M\times M}$ is a unitary matrix whose columns are the eigenvectors of $\bA^{-1/2}\bR\bA^{-1/2}$.
\end{corollary}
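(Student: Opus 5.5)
The plan is to reduce the corollary to Proposition~\ref{prop:maxLogDetTr_C_PSD} through a congruence transformation by $\bA^{-1/2}$. Since $\bA\succ\bzero$ is Hermitian, it has a unique Hermitian positive definite square root $\bA^{1/2}$. Factoring it out gives
\[
\bA+\bB=\bA^{1/2}(\bI+\bC)\bA^{1/2},\qquad \bC=\bA^{-1/2}\bB\bA^{-1/2}.
\]
The map $\bB\mapsto\bC$ is a bijection of the cone of positive semidefinite Hermitian matrices onto itself, with inverse $\bB=\bA^{1/2}\bC\bA^{1/2}$. This is because congruence by an invertible matrix preserves both Hermitian symmetry and the sign of the quadratic form $\bx^\dag\bB\bx$. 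Hence maximizing over $\bB\succeq\bzero$ is equivalent to maximizing over $\bC\succeq\bzero$.

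Next I rewrite the objective in terms of $\bC$. For the log-determinant term, multiplicativity of the determinant gives
\[
\log\det(\bA+\bB)=\log\det(\bA)+\log\det(\bI+\bC).
\]
For the trace term, I use $(\bA+\bB)^{-1}=\bA^{-1/2}(\bI+\bC)^{-1}\bA^{-1/2}$ together with the cyclic property of the trace:
\[
\tr[(\bA+\bB)^{-1}\bR]=\tr[(\bI+\bC)^{-1}\widetilde{\bR}],\qquad \widetilde{\bR}=\bA^{-1/2}\bR\bA^{-1/2}.
\]
The objective therefore equals $-\log\det(\bA)$, which does not depend on $\bC$, plus
\[
-\log\det(\bI+\bC)-\tr[(\bI+\bC)^{-1}\widetilde{\bR}].
\]
Moreover, $\widetilde{\bR}$ is Hermitian and positive semidefinite, again by congruence. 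So the hypotheses of Proposition~\ref{prop:maxLogDetTr_C_PSD} hold with $\bR$ replaced by $\widetilde{\bR}$.

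Applying Proposition~\ref{prop:maxLogDetTr_C_PSD} then gives the maximizer
\[
\widehat{\bC}=\bU\diag(\max\{\lambda_1-1,0\},\ldots,\max\{\lambda_M-1,0\})\bU^\dag,
\]
where $\lambda_m$ are the eigenvalues of $\widetilde{\bR}$ and $\bU$ is the corresponding unitary eigenvector matrix. Mapping back through $\widehat{\bB}=\bA^{1/2}\widehat{\bC}\bA^{1/2}$ yields exactly the stated expression.

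There is no real obstacle here. The only points to check carefully are that the change of variables is a bijection of the feasible set onto itself, so that the constrained maximum is preserved rather than merely bounded, and that discarding the additive constant $-\log\det(\bA)$ does not change the maximizer.
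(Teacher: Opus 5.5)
Your proposal is correct and follows essentially the same route as the paper: the paper also rewrites the objective via congruence by $\bA^{-1/2}$, isolates the constant $-\log\det(\bA)$, and applies Proposition~\ref{prop:maxLogDetTr_C_PSD} with $\bA^{-1/2}\bB\bA^{-1/2}$ and $\bA^{-1/2}\bR\bA^{-1/2}$ in place of $\bC$ and $\bR$. Your explicit check that $\bB\mapsto\bA^{-1/2}\bB\bA^{-1/2}$ maps the positive semidefinite cone bijectively onto itself is a step the paper leaves implicit.
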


\begin{proof}
Let us rewrite the objective function as follows
\begin{align}
\nonumber
&-\log\det(\bA) -\log\det(\bI+\bA^{-1/2}\bB\bA^{-1/2}) \\
&-\tr[(\bI+\bA^{-1/2}\bB\bA^{-1/2})^{-1}\bA^{-1/2}\bR\bA^{-1/2}].
\label{eq:objective_function_cor1}
\end{align}
Thus, we can apply Proposition \ref{prop:maxLogDetTr_C_PSD}
with $\bA^{-1/2}\bB\bA^{-1/2}\succeq\bzero$ and $\bA^{-1/2}\bR\bA^{-1/2}\succeq\bzero$
in place of $\bC$ and $\bR$, respectively.
\end{proof}

A related result is the following.
\begin{corollary}\label{cor:maxLogDetTrM_PD}
Let $\bC\in\C^{M\times M}$, $\bR\in\C^{M\times M}$, and $\bF\in\C^{M\times M}$ 
be three Hermitian matrices such that $\bC\succeq\bzero$,
$\bR\succeq\bzero$, and $\bF\succ\bzero$.
Moreover, we assume that $(\bF-\gamma\bI)\succeq\bzero$ with $\gamma>0$.
Then, the maximizer of the following objective function
\be
-\log\det(\bF+\bC)-\tr[(\bF+\bC)^{-1}\bR]
\ee
with respect to $\bF$ is given by
\begin{align}
&\widehat{\bF}=\gamma \bI + \left( \bC +\gamma \bI \right)^{1/2} \bU \diag\Big(\max\{\lambda_1-1,0\},
\nonumber
\\ 
& \ldots,\max\{\lambda_M-1,0\}\Big)\bU^\dag \left( \bC +\gamma \bI \right)^{1/2},
\end{align}
where $\lambda_1\ge\lambda_2\ge\ldots,\ge\lambda_M \ge0$
are the eigenvalues of the matrix
$\left( \bC +\gamma \bI \right)^{-1/2} \bR \left( \bC +\gamma \bI \right)^{-1/2}$ 
while $\bU\in\C^{M\times M}$ is the unitary matrix whose columns 
are the corresponding eigenvectors.
\end{corollary}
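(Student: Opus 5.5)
The plan is to reduce the statement to Corollary \ref{cor:maxLogDetTr_C_PSD} by a change of variable. The constraint $\bF-\gamma\bI\succeq\bzero$ suggests writing $\bF=\gamma\bI+\bD$ with $\bD\succeq\bzero$ Hermitian. Since this map is a bijection between the feasible set of $\bF$ (Hermitian, $\bF\succeq\gamma\bI$, which automatically gives $\bF\succ\bzero$ because $\gamma>0$) and the cone of positive semidefinite matrices $\bD$, maximizing over $\bF$ is equivalent to maximizing over $\bD\succeq\bzero$ the function
\[
-\log\det\big[(\bC+\gamma\bI)+\bD\big]-\tr\big\{\big[(\bC+\gamma\bI)+\bD\big]^{-1}\bR\big\}.
\]

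Next I would set $\bA=\bC+\gamma\bI$. Since $\bC\succeq\bzero$ and $\gamma>0$, $\bA$ is Hermitian and $\bA\succ\bzero$, so $\bA^{1/2}$ and $\bA^{-1/2}$ are well defined, Hermitian and positive definite. The objective then has exactly the form treated in Corollary \ref{cor:maxLogDetTr_C_PSD}, with $\bB=\bD\succeq\bzero$ and the same $\bR\succeq\bzero$. Invoking that corollary gives the maximizer
\[
\widehat{\bD}=(\bC+\gamma\bI)^{1/2}\bU\diag(\max\{\lambda_1-1,0\},\ldots,\max\{\lambda_M-1,0\})\bU^\dag(\bC+\gamma\bI)^{1/2}.
\]
Here the $\lambda_m$ and $\bU$ are the eigenvalues and eigenvectors of $(\bC+\gamma\bI)^{-1/2}\bR(\bC+\gamma\bI)^{-1/2}$, which is positive semidefinite, so $\lambda_m\ge 0$ as stated. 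Mapping back through $\widehat{\bF}=\gamma\bI+\widehat{\bD}$ yields the claimed expression.

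The only point needing care is the equivalence of the two optimization problems. I must check three things:
\begin{enumerate}[(i)]
\item the reparametrization is onto the feasible set in both directions;
\item the resulting $\widehat{\bF}$ satisfies $\widehat{\bF}-\gamma\bI=\widehat{\bD}\succeq\bzero$, which holds as a congruence of a nonnegative diagonal matrix;
\item $\bF+\bC$ stays invertible throughout, which is true since $\bF+\bC\succeq\gamma\bI\succ\bzero$.
\end{enumerate}
With these verified, the maximizer is transported unchanged, and the proof is essentially a direct application of the earlier corollary.
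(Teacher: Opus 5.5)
Your proposal is correct and matches the paper's proof: the paper also rewrites $\bF+\bC=(\bF-\gamma\bI)+(\bC+\gamma\bI)$ and applies Corollary~\ref{cor:maxLogDetTr_C_PSD} with $\bA=\bC+\gamma\bI\succ\bzero$ and $\bB=\bF-\gamma\bI\succeq\bzero$. Your three checks (the shift is a bijection onto the feasible set, $\widehat{\bF}-\gamma\bI\succeq\bzero$, and $\bF+\bC$ is invertible) are steps the paper leaves implicit.
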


\begin{proof}
Let us rewrite the objective function as follows
\begin{multline}
-\log\det[(\bF-\gamma \bI) +\bC+\gamma \bI]
\\
-\tr\left\{\left[\left(\bF-\gamma \bI\right)+\bC+\gamma \bI\right]^{-1}\bR\right\}.
\end{multline}
We can now apply Corollary \ref{cor:maxLogDetTr_C_PSD} with 
$\bA=\bC+\gamma \bI\succ\bzero$ and $\bB=\bF-\gamma\bI\succeq\bzero$.
The result follows immediately.
\end{proof}

A final result that will be used in the derivations of the decision schemes is the following.
It is a more detailed reformulation of the results given in \cite{Bresler}.

\begin{prop}\label{stima_sigmaquadro}
Let $\phi_1, \ldots, \phi_d$, $2\le d<N$, be a sequence of nonnegative numbers ordered in nonincreasing order.
The function 
$$
g(\sigma^2_n)=g_1(\sigma^2_n)+g_2(\sigma^2_n)
$$
with
\[
g_1(\sigma^2_n) =
- \sum_{i=1}^d \left[
\log \left( \max \left\{ \frac{\phi_i}{\sigma^2_n}, 1 \right\} \right) 
+ \frac{\displaystyle \frac{\phi_i}{\sigma^2_n} }{ \displaystyle\max \left\{\frac{\phi_i}{\sigma^2_n}, 1 \right\}  } 
\right]
\]
and
\[
g_2(\sigma^2_n)=-N \log \sigma^2_n -\frac{a}{\sigma^2_n},
\]
with $a>0$, has a 
maximizer with respect to $\sigma^2_n >0$ given by the unique solution of the equation
\be
a+ \left( \sigma^2_n \right)^2 \frac{d g_1(\sigma^2_n)}{d \sigma^2_n}=N \sigma^2_n,
\label{eq:sigman-prop2}
\ee
given by eqs. \eqref{eq:interval-Bresler} and \eqref{eq:sigmaqj-Bresler}.
\end{prop}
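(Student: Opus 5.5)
Work in the variable $s=\sigma^2_n>0$ and partition the half-line by the values $\phi_1\ge\cdots\ge\phi_d$. On each interval where the set $\{i:\phi_i>s\}$ is fixed, say $\{1,\ldots,k\}$, the function $g$ has a smooth closed form. For $i\le k$ the summand equals $-\log(\phi_i/s)-1$. For $i>k$ it equals $-\phi_i/s$. Hence on the interval $I_k=[\phi_{k+1},\phi_k)$ (with $\phi_0=+\infty$, $\phi_{d+1}=0$),
\[
g(s)=-\sum_{i=1}^k\log\phi_i - k -(N-k)\log s-\frac{a+\sum_{i=k+1}^d\phi_i}{s}.
\]
First I check that $g$ is continuous and in fact $C^1$ at the breakpoints $s=\phi_i$. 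At such a point, $-\log(\phi_i/s)-1$ and $-\phi_i/s$ agree in value ($-1$) and in derivative ($1/s$). Therefore $dg_1/ds$ is well defined everywhere and equals $\sum_{i>k}\phi_i/s^2+k/s$ on $I_k$. Multiplying by $s^2$ and substituting into \eqref{eq:sigman-prop2} reproduces the stationarity condition of the formula above, namely
\[
s=\frac{a+\sum_{i=k+1}^d\phi_i}{N-k}.
\]

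Next I establish existence of a maximizer. As $s\to 0^+$, the interval is $I_d$, and there $g\sim -(N-d)\log s - a/s\to-\infty$ because $a>0$. As $s\to+\infty$, the interval is $I_0$, and $g=-N\log s-(a+\sum\phi_i)/s\to-\infty$. Since $g$ is continuous, a global maximizer exists. Because $g$ is $C^1$ on $(0,\infty)$, every maximizer satisfies $g'(s)=0$, which is \eqref{eq:sigman-prop2}.

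For uniqueness, write $s^2g'(s)=h(s)$ with $h(s)=a+\sum_{i>k}\phi_i+ks-Ns$ on $I_k$. I claim $h$ is continuous and strictly decreasing. Crossing $s=\phi_i$ moves $\phi_i$ from the constant term to the term $ks$, and the two contributions coincide at $s=\phi_i$, so $h$ is continuous. On each piece the slope is $k-N\le d-N<0$, so $h$ is strictly decreasing. Moreover $h(0^+)=a+\sum_i\phi_i>0$ and $h\to-\infty$. Hence $h$ has exactly one zero, which is the unique stationary point and therefore the global maximizer.

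The explicit form follows from the same monotonicity. The admissible $k$ is the unique index for which the candidate $\hat s_k=(a+\sum_{i>k}\phi_i)/(N-k)$ lies in $I_k$. Equivalently, it is the $k$ at which $h$ changes sign across the breakpoints, found by testing $h(\phi_k)$. This gives the interval condition and the closed form that the paper records as \eqref{eq:interval-Bresler} and \eqref{eq:sigmaqj-Bresler}. The main obstacle is the bookkeeping at the breakpoints, including ties among the $\phi_i$ and zero values of $\phi_i$. This is handled by the $C^1$ and continuity argument above, and I would treat it carefully.
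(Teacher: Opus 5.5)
Your proof is correct. Its skeleton matches the paper's: the closed form of $g$ on each interval cut out by the $\phi_i$, the limits $-\infty$ at both ends, the stationarity condition, and the per-interval candidate $(a+\sum_{i>k}\phi_i)/(N-k)$, which is the paper's $\overline{\sigma^2}(j)$ with $k=j-1$. The genuine difference is the uniqueness step. The paper works with the left-hand side $a+s^2 g_1'(s)$ of \eqref{eq:sigman-prop2}. It argues that this function is piecewise linear, increasing and concave (slopes dropping from $d$ to $0$, plus a chord comparison), and concludes that this concave curve, starting at $a>0$, meets the line $Ns$ exactly once. You instead study $s^2 g'(s)=a+s^2 g_1'(s)-Ns$ directly and note that its slope on every piece is $k-N\le d-N<0$, so it is strictly decreasing. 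Your route is shorter and fully elementary. Your explicit $C^1$ check at the breakpoints also justifies setting the derivative to zero, which the paper takes for granted. The paper's concavity argument gives a geometric picture (a concave piecewise-linear curve against a line through the origin) and does not actually use $d<N$ for uniqueness, whereas your monotonicity argument does; in the paper's setting $d<N$ holds, so this costs you nothing. One slip to fix: $h(0^+)=a$, not $a+\sum_i\phi_i$; the latter is the constant term on $I_0$, i.e., for $s>\phi_1$. Relatedly, near $0^+$ the active interval is $I_k$ with $k$ the number of strictly positive $\phi_i$, which is $I_d$ only when $\phi_d>0$. Neither point affects the argument, since $a>0$ already gives both $g\to-\infty$ at $0^+$ and $h(0^+)>0$.
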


\begin{proof}
Notice that $g(\cdot)$ is continuous in $(0,+\infty)$ and it is not difficult to show that
\[
\lim_{\sigma^2_n \rightarrow 0^+}  g(\sigma^2_n)=\lim_{\sigma^2_n \rightarrow +\infty}  g(\sigma^2_n)=-\infty.
\]
Thus, the maximizer can be obtained by setting to zero the first derivative 
of $g$ with respect to $\sigma^2_n$.
Now, computing the derivative of $g$ with respect to $\sigma^2_n$ leads to
\[
\frac{d g(\sigma^2_n)}{d \sigma^2_n}
= -\frac{N}{\sigma^2_n} 
+ \frac{a}{ \left(\sigma^2_n \right)^2} + \frac{dg_1(\sigma^2_n)}{d \sigma^2_n} = 0
\]
and eventually to \eqref{eq:sigman-prop2},
where
\[
\frac{dg_1(\sigma^2_n)}{d \sigma^2_n} =
\left\{
\begin{array}{lr}
\displaystyle d/\sigma^2_n, & 0 < \sigma^2_n \leq \phi_d, 
\\
\displaystyle \frac{(j-1)}{\sigma^2_n} 
+ \frac{1}{\left( \sigma^2_n \right)^2}\sum_{i=j}^d \phi_i,  & \phi_j < \sigma^2_n \leq \phi_{j-1}, 
\\
& j=d, \ldots, 2, 
\\
\vspace{-1mm}
\displaystyle \frac{1}{\left( \sigma^2_n \right)^2}\sum_{i=1}^d \phi_i,  & \phi_1 < \sigma^2_n.
\end{array}
\right.
\]
Although \eqref{eq:sigman-prop2} is nonlinear and
cannot be solved in closed form, we observe that
its left-hand side, namely
\begin{align*}
&h(\sigma^2_n)= a+\left( \sigma^2_n \right)^2 \frac{dg_1(\sigma^2_n)}{d \sigma^2_n} 
\\ &=
a+\left\{
\begin{array}{lr}
d \sigma^2_n, & \phi_{d+1}=0 \le \sigma^2_n \leq \phi_d, 
\\
\displaystyle (j-1)\sigma^2_n + \sum_{i=j}^d \phi_i,  & \phi_j < \sigma^2_n \leq \phi_{j-1}, 
\\
& j=d, \ldots, 2, 
\\
\displaystyle \sum_{i=1}^d \phi_i,  & \phi_1 < \sigma^2_n < \phi_0=+\infty,
\end{array}
\right.
\end{align*}
is piecewise linear and increasing in between $a$ and $a+\sum_{i=1}^d \phi_i$.
Moreover, it is a concave function since the slope of each segment decreases from $d$
to zero with increments equal to $1$ (corresponding to the bisector of the first quadrant). 
To prove concavity of $h(\sigma^2_n)$, we show that the chord with endpoints 
$(0,a)$ and $(\phi_1,\sum_{i=1}^d\phi_i)$, $c(\sigma^2_n)$, is always lower than
$h(\sigma^2_n)$ \cite{boyd2004convex}. As a matter of fact, observe that
\be
c(\sigma^2_n)=\sum_{i=1}^d \frac{\phi_i}{\phi_1} \sigma^2_n + a, \quad 0\le \sigma^2_n \le \phi_1.
\ee
Thus, when $0\le \sigma^2_n\le\phi_d$, we have that
\be
h(\sigma^2_n)-c(\sigma^2_n)=\left(d-\sum_{i=1}^d \frac{\phi_i}{\phi_1}\right)\sigma^2_n\ge 0,
\ee
where the inequality is due to the fact that $\forall i=1,\ldots,d: \ \phi_i/\phi_1 \le 1$.
Now, let us consider the case $\phi_j<\sigma^2_n\le \phi_{j-1}$, $2\le j \le d$, and notice that
\begin{align}
h(\sigma^2_n)-c(\sigma^2_n)&=\left[ (j-1)-\sum_{i=1}^d \frac{\phi_i}{\phi_1} \right]\sigma^2_n+\sum_{i=j}^d \phi_i
\nonumber
\\
&=\left[ \underbrace{(j-1)-\sum_{i=1}^{j-1} \frac{\phi_i}{\phi_1}}_{b\ge 0}-\sum_{i=j}^d\frac{\phi_i}{\phi_1} \right]\sigma^2_n
\nonumber
\\
&+\phi_1\sum_{i=j}^d \frac{\phi_i}{\phi_1}
\nonumber
\\
&=b\sigma^2_n+(\phi_1-\sigma^2_n)\sum_{i=j}^d \frac{\phi_i}{\phi_1} \ge 0.
\end{align}
Finally, it is not difficult to prove that the slope of the chord having endpoints $(0,a)$ and
$(y,\sum_{i=1}^d\phi_i)$ with $y>\phi_1$, is always lower than the slope of $c(\sigma^2_n)$.
Since the right-hand side of \eqref{eq:sigman-prop2} is strictly increasing and starts from $0$,
it is apparent that (\ref{eq:sigman-prop2})  admits one and only one solution 
for $\sigma^2_n>0$ that can be determined finding that value of $j$, say $\widehat{j}$, such that
the potential solution of the equation with 
\be
\phi_j < \sigma^2_n \leq \phi_{j-1}, \quad j=d+1, \ldots, 1,
\label{eq:interval-Bresler}
\ee
namely
\be
\overline{\sigma^2}(j)=
\begin{cases}
\displaystyle \frac{a}{N-d}, & j=d+1,
\\
\vspace{-1mm}
\\
\displaystyle \frac{1}{N-(j-1)}\left( a+ \sum_{i=j}^d \phi_i \right), & j=d, \ldots, 1,
\end{cases}
\label{eq:sigmaqj-Bresler}
\ee
actually belongs to the interval given by \eqref{eq:interval-Bresler}.
\end{proof}

\smallskip

The above results will be used in the next subsection to derive the
detection architecture.

\subsection{LRT-based Detector}
\label{Subsec:LRTdesign}
The design criterion consists in starting from the likelihood ratio test and
in replacing the unknown parameters with suitable estimates obtained by exploiting
all the available data. Specifically, the decision statistics is
\be
\frac{f_1(\bZ_P;\widehat{\sigma}^2_{n},\widehat{\bM}_{c}, \widehat{\bP}, \widehat{\bQ})}
{f_0(\bZ_P;\widehat{\sigma}^2_{n},\widehat{\bM}_{c},\widehat{\bQ})}\test\eta,
\label{eqn:LRT}
\ee
where $\widehat{\sigma}^2_{n}$ and $\widehat{\bQ}$ are the estimates of $\sigma^2_n$ and $\bQ$, respectively, 
based on $\bX$,
$\widehat{\bM}_{c}$ is the estimate of $\bM_c$ based on $\bZ_S$, and $\widehat{\bP}$ is the estimate
of $\bP$ over $\bZ_P$;
finally, $\eta$ is the detection threshold used to control
the probability of false alarm ($P_{fa}$).\footnote{Hereafter, we use symbol $\eta$ to denote
the generic detection threshold.}

The estimates under $H_1$ are computed according to the following steps
\begin{enumerate}
\item estimate $\sigma^2_n$ and $\bQ$ by maximizing the PDF of the additional data set $\bX$;
\item replace these estimates in the PDF of $\bZ_S$ and derive the maximum likelihood estimate of $\bM_c$;
\item consider the PDF of $\bZ_P$ and assume that $\sigma^2_n$, $\bQ$, 
and $\bM_c$ are known and equal to the previously obtained estimates; 
then, maximize the PDF of $\bZ_P$
with respect to $\bP$.
\end{enumerate}
Let us start from the estimation of $\sigma^2_n$ and $\bQ$. 
To this end, we suitably recast \eqref{eqn:PDF_X} to 
exploit Proposition \ref{prop:maxLogDetTr_C_PSD} and maximize
the log-likelihood function. Specifically, by applying the Woodbury identity and the equality
\be
\det(\bI+\bA\bB)=\det(\bI+\bB\bA)
\label{eqn:detAB_BA}
\ee
with $\bA$ and $\bB$ two matrices such that $\bA\bB$ and $\bB\bA$ 
are well-defined \cite{lutkepohl1997handbook}, 
we can write the logarithm of \eqref{eqn:PDF_X} as
\begin{align}
\cL_X(\sigma^2_n,\bQ)=&-K_JN\log\pi-K_JN\log\sigma^2_n
\nonumber
\\
&-K_J\log\det\left(\bI+\frac{\bQ\bGamma}{\sigma^2_n}\right)-\frac{K_J}{\sigma^2_n}\tr(\bS_X)
\nonumber
\\
&+\frac{K_J}{\sigma^2_n}\tr\left[\left(\sigma^2_n\bQ^{-1}+\bGamma\right)^{-1} \bS_{X,J}\right],
\label{eqn:LL_X_00}
\end{align} 
where for the moment we have assumed that $\bQ$ is nonsingular, 
$\bGamma=\bJ^\dag\bJ\in\C^{q\times q}$ is Hermitian positive definite, $\bS_X=\bX\bX^\dag/K_J$, and 
$\bS_{X,J}=\bJ^\dag\bS_X\bJ\in\C^{q\times q}$.
The Cholesky decomposition of $\bGamma$ allows us to recast \eqref{eqn:LL_X_00} as
\begin{align}
&\cL_X(\sigma^2_n,\bQ)=-K_JN\log\pi-K_JN\log\sigma^2_n
\nonumber
\\
&-K_J\log\det\left(\bI+\frac{\bQ\bL\bL^\dag}{\sigma^2_n}\right)-\frac{K_J}{\sigma^2_n}\tr(\bS_X)
\nonumber
\\
&+\frac{K_J}{\sigma^2_n}\tr\left[\left(\sigma^2_n\bQ^{-1}+\bL\bL^\dag\right)^{-1} \bS_{X,J}\right]
\\
&=-K_JN\log\pi-K_JN\log\sigma^2_n
\nonumber
\\
&-K_J\log\det\left(\bI+\frac{\bL^\dag\bQ\bL}{\sigma^2_n}\right)-\frac{K_J}{\sigma^2_n}\tr(\bS_X)
\nonumber
\\
&+\frac{K_J}{\sigma^2_n}\tr\left[\left(\bI+\sigma^2_n(\bL^\dag\bQ\bL)^{-1}\right)^{-1} 
\bL^{-1}\bS_{X,J}(\bL^\dag)^{-1}\right],
\label{eqn:LL_X_01}
\end{align}
where $\bL\in\C^{q\times q}$ is a lower triangular matrix with real and 
positive diagonal entries such that $\bGamma=\bL\bL^\dag$ and 
we have applied again \eqref{eqn:detAB_BA}. Let us make a change of variable by defining
$\bY=\bL^\dag\bQ\bL/\sigma^2_n$ and rewrite \eqref{eqn:LL_X_01} as follows
\begin{align}
\cL_X(\sigma^2_n,\bY)=&-K_JN\log\pi-K_JN\log\sigma^2_n
\nonumber
\\
&-K_J\log\det\left(\bI+\bY\right)-\frac{K_J}{\sigma^2_n}\tr(\bS_X)
\nonumber
\\
&+\frac{K_J}{\sigma^2_n}\tr\left[\left(\bI+\bY^{-1}\right)^{-1}\bS_{X,J,L}\right],
\label{eqn:LL_X_02}
\end{align}
where $\bS_{X,J,L}=\bL^{-1}\bS_{X,J}(\bL^\dag)^{-1}$. Now, it is not difficult to show that
\begin{align}
(\bI+\bY^{-1})^{-1}&=[\bY^{-1}(\bI+\bY)]^{-1}=(\bI+\bY)^{-1}\bY
\nonumber
\\
&=\bI-(\bI+\bY)^{-1}.
\end{align}
It follows that
\begin{align}
\cL_X(\sigma^2_n,\bY)=&-K_JN\log\pi-K_JN\log\sigma^2_n
\nonumber
\\
&-K_J\log\det\left(\bI+\bY\right)-\frac{K_J}{\sigma^2_n}\tr(\bS_X-\bS_{X,J,L})
\nonumber
\\
&-K_J\tr\left[\left(\bI+\bY\right)^{-1}\frac{\bS_{X,J,L}}{\sigma^2_n}\right].
\label{eqn:LL_X_03}
\end{align}
It is clear that the above expression no longer requires that $\bQ$ is nonsingular. Equality between \eqref{eqn:LL_X_03} and \eqref{eqn:PDF_X} for positive semidefinite $\bQ$ follows immediately. As a matter of fact,
for any positive semidefinite $\bQ$  and $\forall \epsilon >0$, 
$\bQ+ \epsilon \bI$ is positive definite  and expressions \eqref{eqn:LL_X_03} and \eqref{eqn:PDF_X}  are equal. Thus, they are equal also in the limit as $\epsilon \rightarrow 0$,
since they are continuous with respect to $\epsilon$.

To maximize $\cL_X(\sigma^2_n,\bY)$ with respect to $\bY$, 
we apply Proposition \ref{prop:maxLogDetTr_C_PSD} with
$\bC=\bY$ and $\bR=\bS_{X,J,L}/\sigma^2_n$. It follows that
$\bU$ contains the eigenvectors of $\bS_{X,J,L}$, 
$\lambda_i=\beta_i/\sigma^2_n$, $i=1,\ldots,q$,
with $\beta_1\ge\ldots\ge\beta_{q}\ge 0$ the eigenvalues of $\bS_{X,J,L}$,
and
\be
\widehat{\bY}=\bU \diag\!\!\left(\!\max\left\{\!\frac{\beta_1}{\sigma^2_n}-1,0\!\right\},\ldots,
\max\left\{\!\frac{\beta_{q}}{\sigma^2_n}-1,0\!\right\}\!\right)\!\bU^\dag.
\ee
Replacing $\bY$ with $\widehat{\bY}$ in \eqref{eqn:LL_X_03} leads to
\begin{align}
\cL_X(\sigma^2_n,\widehat{\bY})=&-K_JN\log\pi-K_JN\log\sigma^2_n
\nonumber
\\
&-K_J\sum_{i=1}^{q}\log\left[\max\left(\frac{\beta_i}{\sigma^2_n},1\right)\right]
\nonumber
\\
&-\frac{K_J}{\sigma^2_n}\tr(\bP_J^\perp\bS_X)
\nonumber
\\
&-K_J\sum_{i=1}^{q}\frac{\beta_i/\sigma^2_n}{\max(\beta_i/\sigma^2_n,1)},
\label{eqn:LL_X_04}
\end{align}
where we have used the following equality
\begin{align}
\tr(\bS_X-\bS_{X,J,L})&=\tr(\bS_X)-\tr(\bS_{X,J,L})
\nonumber
\\
&=\tr(\bS_X)-\tr(\bJ(\bL\bL^\dag)^{-1}\bJ^\dag\bS_{X})
\nonumber
\\
&=\tr(\bS_X-\bP_J\bS_X)
\nonumber
\\
&=\tr(\bP^\perp_J\bS_X)
\end{align}
with $\bP_J=\bJ(\bJ^\dag\bJ)^{-1}\bJ^\dag$ and $\bP_J^\perp=\bI-\bP_J$.

The final estimation of this step concerns $\sigma^2_n$. This estimate
is easily performed resorting to Proposition \ref{stima_sigmaquadro} with $q$ in place of $d$ and the $\beta_i$s in place of the $\phi_i$s. 
Denoting by $\widehat{\sigma}^2_{n}$ the estimate of $\sigma^2_n$, the final estimate of $\bQ$ is given by
\begin{multline}
\widehat{\bQ}=\widehat{\sigma}^2_{n} (\bL^\dag)^{-1}
\bU
\\
\times 
\diag\left(\frac{\beta_1}{\widehat{\sigma}^2_{n}}-1,\ldots,\frac{\beta_{\widehat{j}-1}}
{\widehat{\sigma}^2_{n}}-1,0,\ldots,0\right)
\bU^\dag\bL^{-1}
\end{multline}
if $\widehat{j}>1$
and
$$
\widehat{\bQ}= \bzero
$$
if instead $\widehat{j}=1$ (see the definition of $\widehat{j}$ before \eqref{eq:interval-Bresler}).
The first step is thus complete.

In the second step, we use the PDF of $\bZ_S$ to estimate $\bM_c$
under the assumption that $\sigma^2_n=\widehat{\sigma}^2_{n}$ and
$\bQ=\widehat{\bQ}$. Using \eqref{eqn:PDF_Zs_H1_3}, the logarithm of the 
PDF of $\bZ_S$ can be recast as
\begin{align}
\cL_S(\bM_c)=&-K_SN\log\pi-K_S\log\det(\bT+\bM_c)
\nonumber
\\
&-K_S\tr[(\bT+\bM_c)^{-1}\bS_S]
\nonumber
\\
=&-K_SN\log\pi-K_S\log\det(\bT)
\nonumber
\\
&-K_S\log\det(\bI+\bM_{c,T})
\nonumber
\\
&-K_S\tr[(\bI+\bM_{c,T})^{-1}\bS_{S,T}],
\end{align}
where $\bT=\widehat{\sigma}^2_{n}\bI+\bJ\widehat{\bQ}\bJ^\dag$, $\bS_S=\bZ_S\bZ_S^\dag/K_S$,
$\bS_{S,T}=\bT^{-1/2}\bS_{S}\bT^{-1/2}$, and $\bM_{c,T}=\bT^{-1/2}\bM_c\bT^{-1/2}$.
Exploiting Proposition \ref{prop:maxLogDetTr_C_PSD} with $\bC=\bM_{c,T}$
and $\bR=\bS_{S,T}$, we can maximize $\cL_S(\bM_c)$ with respect to
$\bM_{c,T}$ and, as a byproduct, to $\bM_c$. In this case, $\bU$ contains 
the eigenvectors of $\bS_{S,T}$ whose eigenvalues
are denoted by $\omega_1\ge \ldots\ge \omega_N\ge 0$.
Thus, the maximum likelihood 
estimate of $\bM_c$ over $\bZ_S$ is given by
\begin{multline}
\widehat{\bM}_{c}=\bT^{1/2}\bU
\diag(
\max\{\omega_1-1,0\},
\\
\ldots,\max\{\omega_N-1,0\})
\bU^\dag
\bT^{1/2}.
\end{multline}

Finally, we accomplish the last step by suitably recasting the PDF of $\bZ_P$ under
the assumption that $\bM_c=\widehat{\bM}_{c}$, $\sigma^2_n=\widehat{\sigma}^2_{n}$, and
$\bQ=\widehat{\bQ}$. Specifically, let us define  
$\bG=\widehat{\sigma}^2_{n}\bI+\widehat{\bM}_{c}+\bJ\widehat{\bQ}\bJ^\dag$ and write
the logarithm of the PDF of $\bZ_P$ as follows
\begin{align}
\cL_P(\bP)=&-K_PN\log\pi-K_P\log\det(\bG)
\nonumber
\\
&-K_P\log\det(\bI+\bH_{G}\bP\bH_{G}^\dag)
\nonumber
\\
&-K_P\tr[(\bI+\bH_{G}\bP\bH_{G}^\dag)^{-1}\bS_{G}],
\end{align}
where $\bH_{G}=\bG^{-1/2}\bH$ and $\bS_{G}=\bG^{-1/2}\bS_P\bG^{-1/2}$ with $\bS_P=\bZ_P\bZ_P^\dag/K_P$.
Following the same line of reasoning as for the estimation of $\bQ$ by means of $\bX$,
$\cL_P(\bP)$ can be further recast as
\begin{align}
\cL_P(\bP)=&-K_PN\log\pi-K_P\log\det(\bG)
\nonumber
\\
&-K_P\log\det(\bI+\bL_{G}^\dag\bP\bL_{G})
\nonumber
\\
&-K_P\tr(\bS_{G}-\bS_{G,H,L})
\nonumber
\\
&-K_P\tr[(\bI+\bL_{G}^\dag\bP\bL_{G})^{-1}\bS_{G,H,L}],
\end{align}
where $\bL_G\in\C^{r\times r}$ is a lower triangular matrix obtained from the
Cholesky decomposition of $\bH_{G}^\dag\bH_{G}\in\C^{r\times r}$ and 
$\bS_{G,H,L}=\bL_G^{-1}\bH_G^\dag\bS_{G}\bH_G(\bL_G^\dag)^{-1}\in\C^{r\times r}$.
By Proposition \ref{prop:maxLogDetTr_C_PSD}, the maximization of $\cL_P(\bP)$ with respect to $\bP$ leads to
\begin{multline}
\widehat{\bP}=(\bL_G^\dag)^{-1}
\bU
\\
\times 
\diag\left(\max\{\phi_1-1,0\},\ldots,\max\{\phi_r-1,0\}\right)
\bU^\dag\bL_G^{-1},
\end{multline}
where $\phi_1\ge\ldots\ge\phi_r\ge 0$ are the eigenvalues of $\bS_{G,H,L}$
and the columns of $\bU$ are the respective eigenvectors.

As for the estimates under $H_0$, we proceed by repeating the first two steps 
of the procedure under $H_1$, namely
\begin{enumerate}
\item compute the maximum likelihood estimates of $\sigma^2_n$ and $\bQ$ by using the additional
data set $\bX$;
\item replace the estimates at the previous step in the PDF of $\bZ_S$ and
derive the maximum likelihood estimate of $\bM_c$.
\end{enumerate}
Then, we  replace all the above estimates in the PDF of $\bZ_P$ under $H_0$ to obtain
\begin{align}
\cL_0(\widehat{\sigma}^2_{n},&\widehat{\bM}_{c},\widehat{\bQ})=-K_PN\log\pi
\nonumber
\\
&-K_P\log\det\left(\widehat{\sigma}^2_{n}\bI+\widehat{\bM}_{c}+\bJ\widehat{\bQ}\bJ^\dag\right)
\nonumber
\\
&-K_P\tr\left[\left(\widehat{\sigma}^2_{n}\bI+\widehat{\bM}_{c}+\bJ\widehat{\bQ}\bJ^\dag\right)^{-1}\bS_P\right].
\end{align}

In what follows, we refer to \eqref{eqn:LRT} as second-order jamming-affected LRT (SOJ-LRT).

\section{An Identifiability Issue}
\label{Sec:Identifiability}
In this section, we show that estimating $\bQ$ and $\bM$ by using $\bZ_S$ only
is affected by an identifiability issue. To this end, let us consider the terms
of the log-likelihood function depending on $\bQ$ and $\bM=\sigma^2_n\bI+\bM_c$, namely
\be
-K_S\log\det(\bM+\bJ\bQ\bJ^\dag)-K_S\tr[
(\bM+\bJ\bQ\bJ^\dag)^{-1}\bS_S]
\ee
and recast them by using the QR-decomposition of $\bJ=\bU_j\bR$, where $\bU_j\in\C^{N\times q}$
is a slice of a unitary matrix (i.e., $\bU_j^\dag\bU_j=\bI$) and $\bR\in\C^{q\times q}$
is an upper triangular (that is invertible since $\bJ$ is full-column rank), as follows
\begin{multline}
-K_S\log\det\left(\bM+\bU_j\bQ_R\bU_j^\dag\right)
\\
-K_S\tr\left[\left(\bM+\bU_j\bQ_R\bU_j^\dag\right)^{-1}\bS_S\right],
\end{multline}
where $\bQ_R=\bR\bQ\bR^\dag$. Now, let $\bV\in\C^{N\times N}$ be a unitary matrix such that
$\bV\bU_j=\bE=[\bI \ \bzero]^T$, where the identity matrix has size $q\times q$ and recast
the objective function in terms of $\bE$ as follows
\begin{multline}
-K_S\log\det\left(\bM_V+\bE\bQ_R\bE^\dag\right)
\\
-K_S\tr\left[\left(\bM_V+\bE\bQ_R\bE^\dag\right)^{-1}\bS_{SV}\right],
\label{eqn:identifiability_obj_1}
\end{multline}
where $\bM_V=\bV\bM\bV^\dag$ and $\bS_{SV}=\bV\bS_S\bV^\dag$.
Finally, partitioning $\bM_V$ and $\bS_{SV}$ as follows
\be
\bM_V=
\begin{bmatrix}
\bM_{V,11} & \bM_{V,12}
\\
\bM_{V,21} & \bM_{V,22}
\end{bmatrix}, \ 
\bS_{SV}=\begin{bmatrix}
\bS_{SV,11} & \bS_{SV,12}
\\
\bS_{SV,21} & \bS_{SV,22}
\end{bmatrix},
\ee
where $\bM_{V,11}\in\C^{q\times q}$ and $\bS_{SV,11}\in\C^{q\times q}$,
it turns out that
\[
\bM_V+\bE\bQ_R\bE^\dag=
\begin{bmatrix}
\bM_{V,11}+ \bQ_R & \bM_{V,12}
\\
\bM_{V,21} & \bM_{V,22}
\end{bmatrix}.
\]
It is thus apparent that
different (actually infinite) choices of $\bM_{V,11}$ and $\bQ_R$ return the same value of
eq. \eqref{eqn:identifiability_obj_1}.

\section{Performance Assessment}
\label{Sec:PerformanceAssessment}
In this section, we assess the performance of the proposed SOJ-LRT 
against a suitable counterpart by resorting to standard Monte Carlo (MC)
counting techniques. 
After describing the simulation setup and the benchmark detector, we investigate the 
behavior of the SOJ-LRT false-alarm probability ($P_{fa}$) and show 
that a fixed threshold does not ensure its regulation when 
the clutter and jammer powers vary. For this reason, we modify the detection architecture
by introducing and validating an adaptive threshold-selection strategy based 
on estimated disturbance parameters. Finally, the detection performance 
of SOJ-LRT is assessed for different amounts of secondary and additional training 
data in comparison with that of the benchmark detector.
Specifically, the detection performance is evaluated in terms of probability of detection
$(P_d)$ for a given $P_{fa}=10^{-4}$ (nominal $P_{fa}$); the $P_d$ is estimated
over $1000$ independent MC trials; as for the detection thresholds,
they are set by using $100/P_{fa}$ MC independent trials.
In all the illustrative examples, we assume $N=16$ spatial channels,
$r=3$, $q=4$, $K_P=16$, $K_S\in\{12,16,24\}$, and $K_J\in\{12,16,24\}$.

The covariance clutter component is generated by assuming that clutter returns
impinge the system from the angular sector $[-50^\circ,-30^\circ]$. To this end, we
compute the prolate matrix $\bR_{\beta_C} \in \C^{N \times N}$
whose $(m,n)$th entry has the following expression \cite{VanTrees4,4490112}
\be
\bR_{\beta_C}(m,n) = 2 e^{j(m-n)\pi \gamma_{1,C}} \gamma_{2,C} \pi \sinc((n-m)\gamma_{2,C})
\ee
where $\gamma_{1,C}=\sin(-40\pi/180)\cos(\theta_C)$ and $\gamma_{2,C}=\cos(-40\pi/180)\sin(\theta_C)$
with $\theta_C=10\pi/180$.
Then, we compute the singular value decomposition of $\bR_{\beta_C}$, select the
left unitary matrix $\bU_C\in\C^{N\times N}$, and define $\bM_c$ as
\be
\bM_c=\frac{\sigma^2_c}{\tr[\bU_C\bU_C^\dag]}\bU_C\bU_C^\dag
\ee
where $\sigma^2_c>0$ is the clutter power set according
to a clutter-to-noise ratio (CNR) of $30$ dB and defined as $\sigma^2_c/\sigma^2_n$
with $\sigma^2_n=1$.

In the simulated scenario 
the signal (jamming) component in the $i$th primary data vector $\bz_{p,i}$, $i=1,\ldots,K_P$,
is given by $\alpha_{T,i}\bv(\phi_{T,i})$ ($\alpha_{J,i}\bv(\phi_{J,i})$), with
\[
\bv(\theta) = \frac{1}{\sqrt{N}} \left[ 1 \ e^{j\theta} \ \cdots \ e^{j(N-1)\theta } \right]^T.
\] 
The electrical angles $\phi_{T,i}$ ($\phi_{J,i}$) are independent random variables uniformly 
distributed over $\Phi_T=\left[ -\pi \beta_T, \pi \beta_T \right]$ 
($\Phi_J=\left[\pi\beta_{J,l}, \pi\beta_{J,u} \right]$), 
where $\beta_T=\sin \theta_T$ ($\beta_{J,l}=\sin(20\pi/180-\theta_J)$, 
$\beta_{J,u}=\sin(20\pi/180+\theta_J)$) 
and $\theta_T=2\pi/180$ ($\theta_J=2\pi/180$) radians. 
We form the signal (jamming) subspace 
by computing the prolate matrix $\bR_{\beta_T} \in \C^{N \times N}$ ($\bR_{\beta_J} \in \C^{N \times N}$), 
whose $(m,n)$th entry has the following expression \cite{VanTrees4,4490112}
\be
\bR_{\beta_T}(m,n) = 2 \beta_T \pi \sinc((n-m)\beta_T)
\ee
($\bR_{\beta_J}(m,n) = 2 e^{j(m-n)\pi \gamma_{1,J}} \gamma_{2,J} \pi \sinc((n-m)\gamma_{2,J})$, where 
$\gamma_{1,J}=\sin(20\pi/180)\cos(\theta_J)$ and $\gamma_{2,J}=\cos(20\pi/180)\sin(\theta_J)$).
Then, the signal (jamming) subspace is chosen as that spanned by the 
first $r$ ($q$) eigenvectors of $\bR_{\beta_T}$ ($\bR_{\beta_J}$) corresponding to
the most significant eigenvalues.

The jamming components in $\bZ_S$ (secondary data) and $\bX$ (additional data set) 
are generated using the same procedure as for the primary data.

As for the target and jamming amplitudes, they are generated as independent circularly symmetric
complex Gaussian random variables with zero mean and variance $\sigma^2_T>0$ for the target and $\sigma^2_J>0$
for the jammer. The signal-to-noise ratio (SNR) for a single range bin, defined as $\sigma^2_T/\sigma^2_n$, 
is used to set $\sigma^2_T>0$, while the jammer-to-noise ratio (JNR) for a single range bin, 
defined as $\sigma^2_J/\sigma^2_n$, allows us to set $\sigma^2_J$.

Finally, we provide the expression of the competitor. To this end, denoting
by $\bW_S=\frac{1}{\sqrt{K_S}}[\bS_S^{-1/2}\bH, \ \bS_S^{-1/2}\bJ]$, it is
given by \cite{SD}
\be
\label{eqn:Competitor}
\frac{\det\left[\bI+\frac{1}{K_S}(\bS_S^{-1/2}\bZ_P)^\dag(\bI-\bP_{\bf J_S})(\bZ_P\bS_S^{-1/2})\right]}
{\det\left[
\bI+\frac{1}{K_S}(\bS_S^{-1/2}\bZ_P)^\dag
(\bI-\bP_{\bf W_S})
(\bS_S^{-1/2}\bZ_P)
\right]}\test\eta,
\ee
where $\bP_{\bf W_S}=\bW_S\left(\bW_S^\dag\bW_S\right)^{-1}\bW_S^\dag$ and
$\bP_{\bf J_S}=\bS_S^{-1/2}\bJ\left(\bJ^\dag\bS_S^{-1}\bJ\right)^{-1}\bJ^\dag\bS_S^{-1/2}$.
Notice that the test statistic of the above detector can be implemented if and only if $\bS_S$
is invertible with probability $1$, namely when $K_S\ge N$.

\bigskip

\begin{figure*}[t]
    \centering
    \begin{subfigure}[t]{0.32\textwidth}
        \centering
        \includegraphics[width=\linewidth]{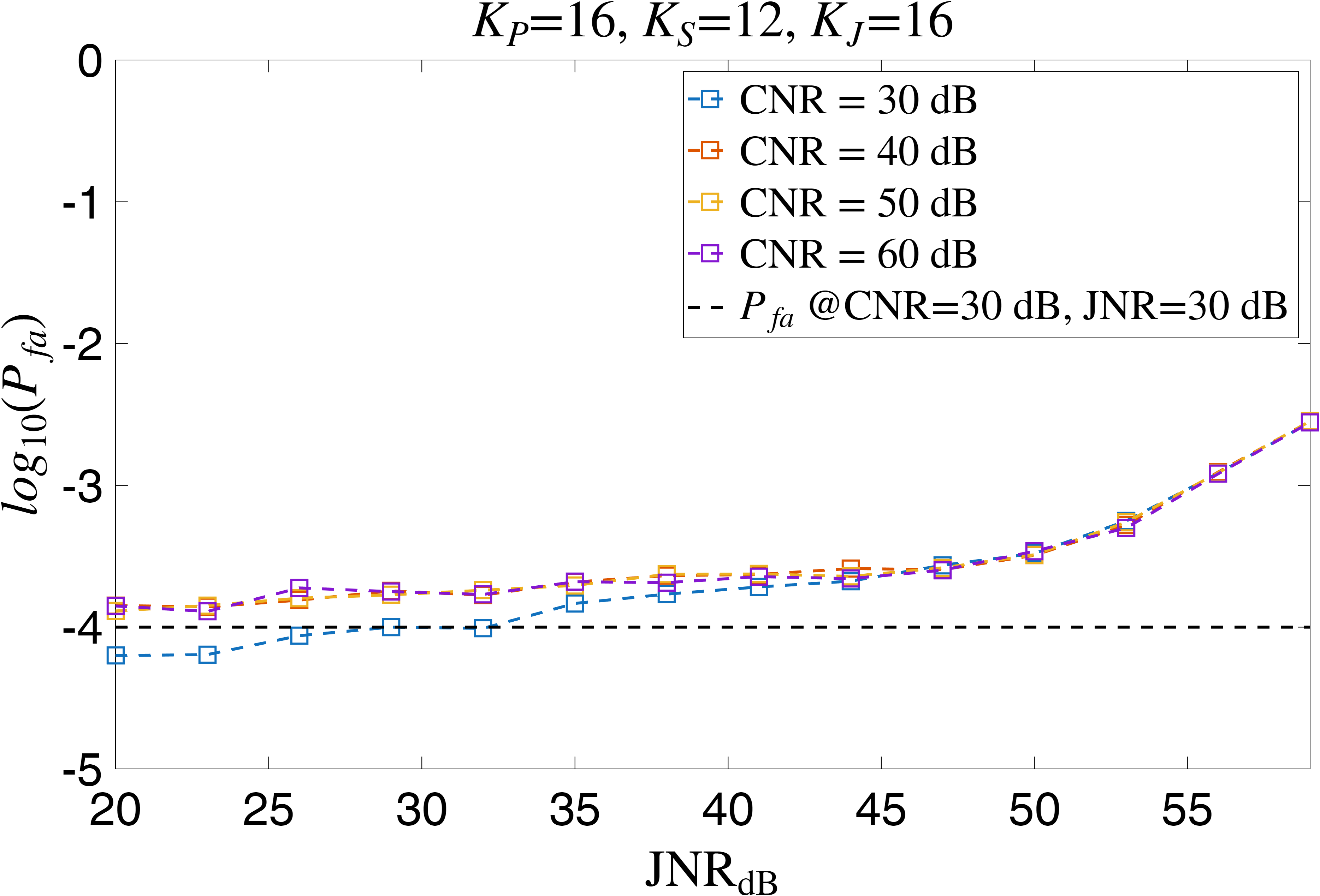}
        \caption{}
        \label{fig:a}
    \end{subfigure}
    \hfill
    \begin{subfigure}[t]{0.32\textwidth}
        \centering
        \includegraphics[width=\linewidth]
        {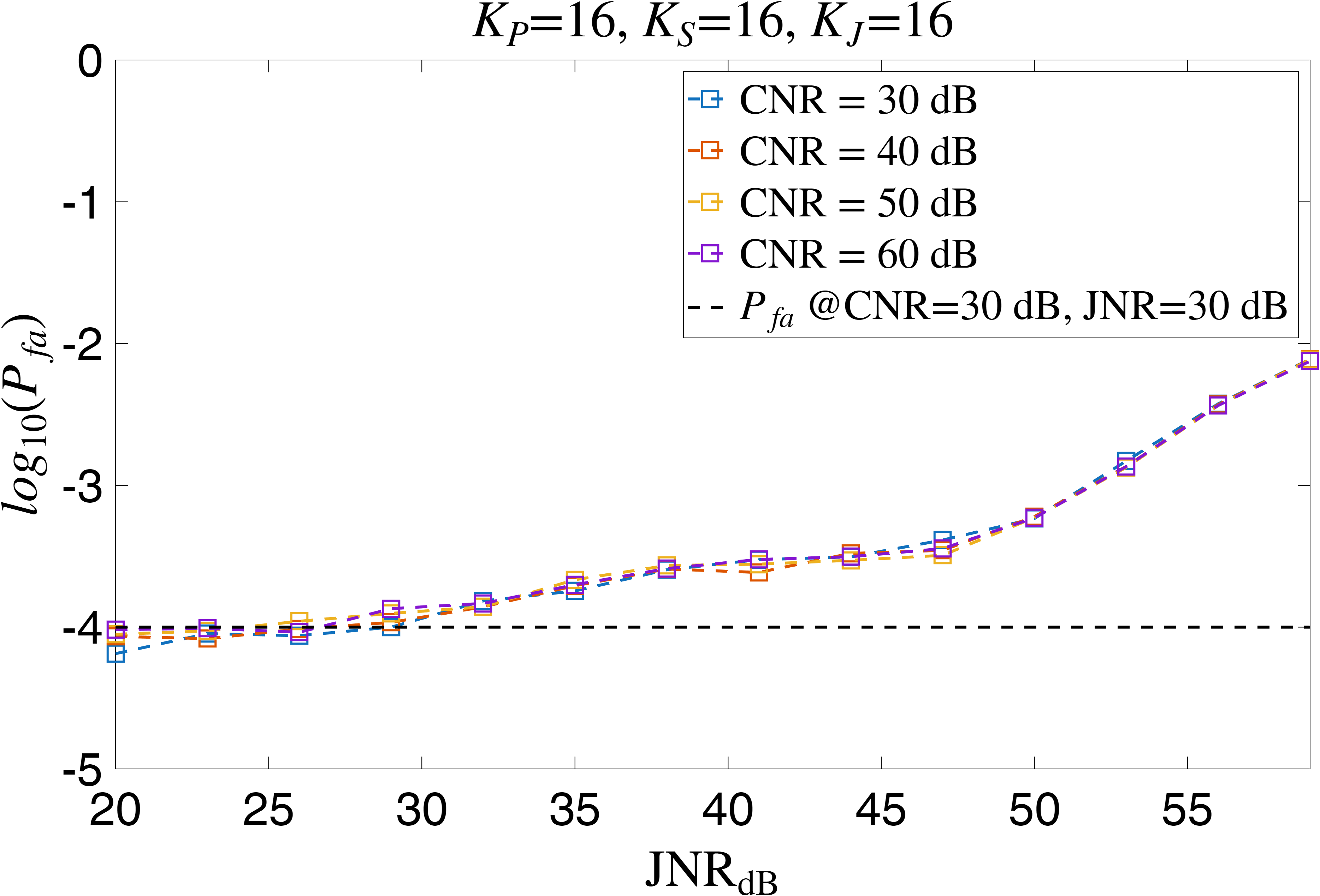}
        \caption{}
        \label{fig:b}
    \end{subfigure}
    \hfill
    \begin{subfigure}[t]{0.32\textwidth}
        \centering
        \includegraphics[width=\linewidth]
        {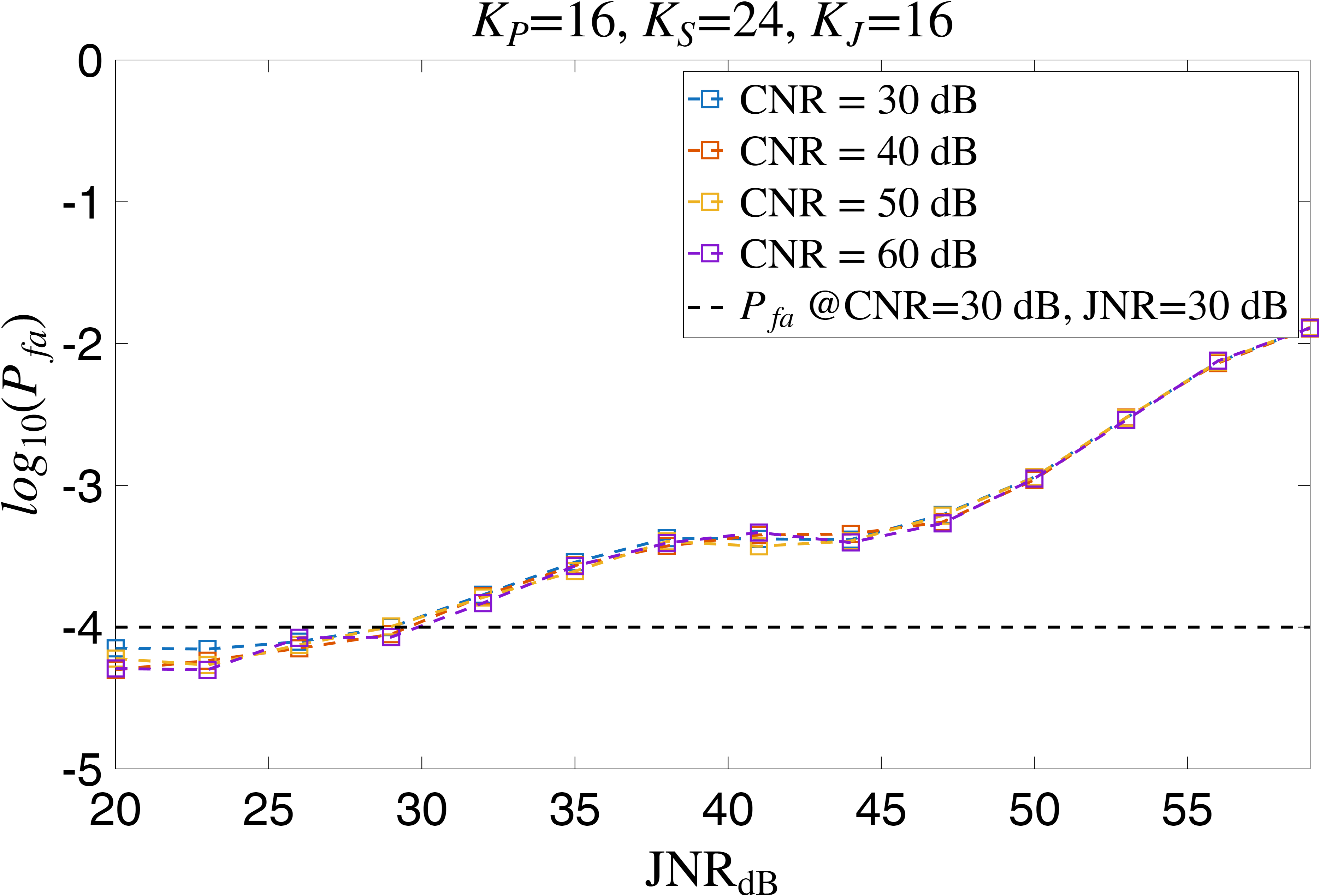}
        \caption{}
        \label{fig:c}
    \end{subfigure}

    \medskip

    \begin{subfigure}[t]{0.32\textwidth}
        \centering
        \includegraphics[width=\linewidth]
        {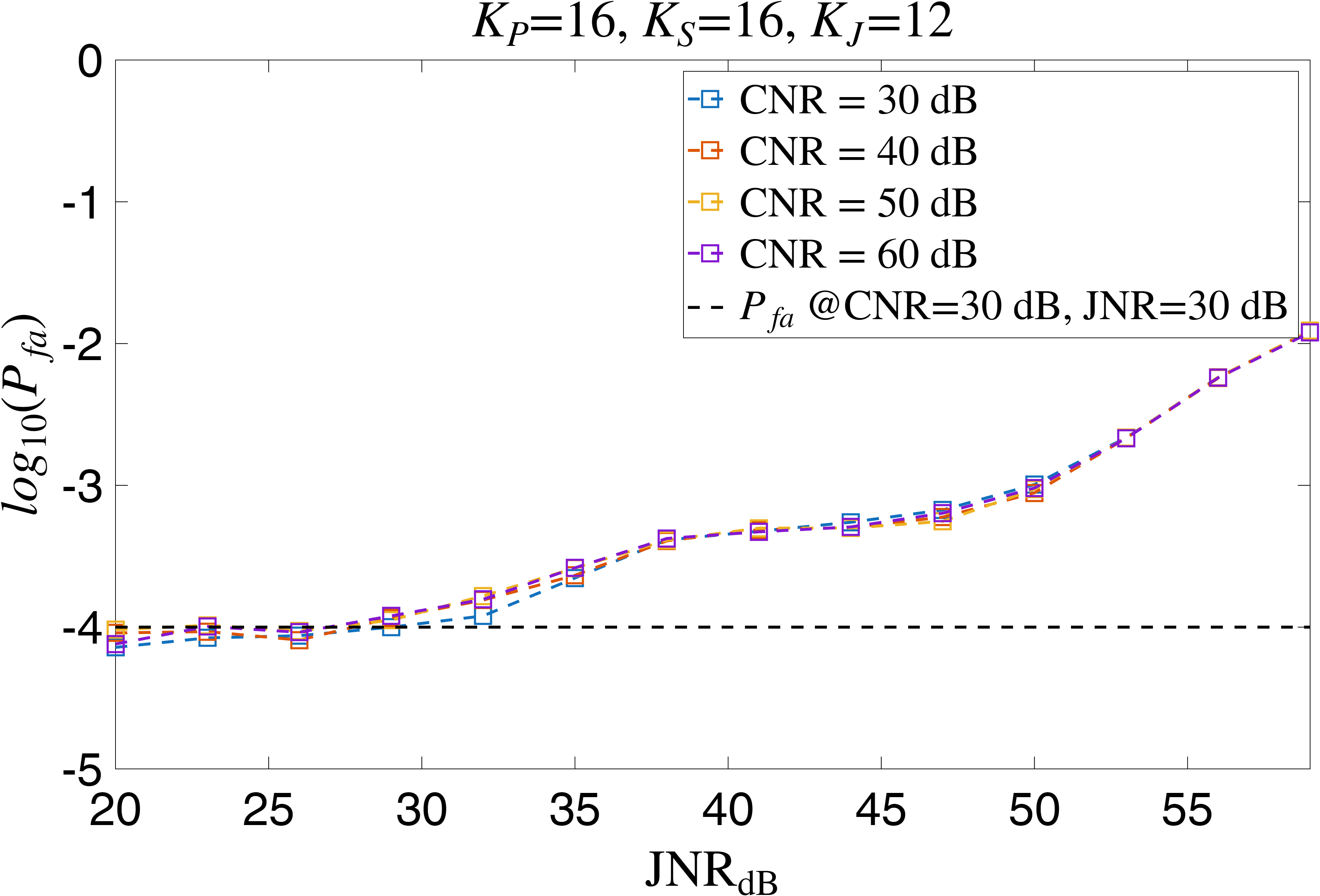}
        \caption{}
        \label{fig:d}
    \end{subfigure}
    \hspace{0.03\textwidth}
    \begin{subfigure}[t]{0.32\textwidth}
        \centering
        \includegraphics[width=\linewidth]
        {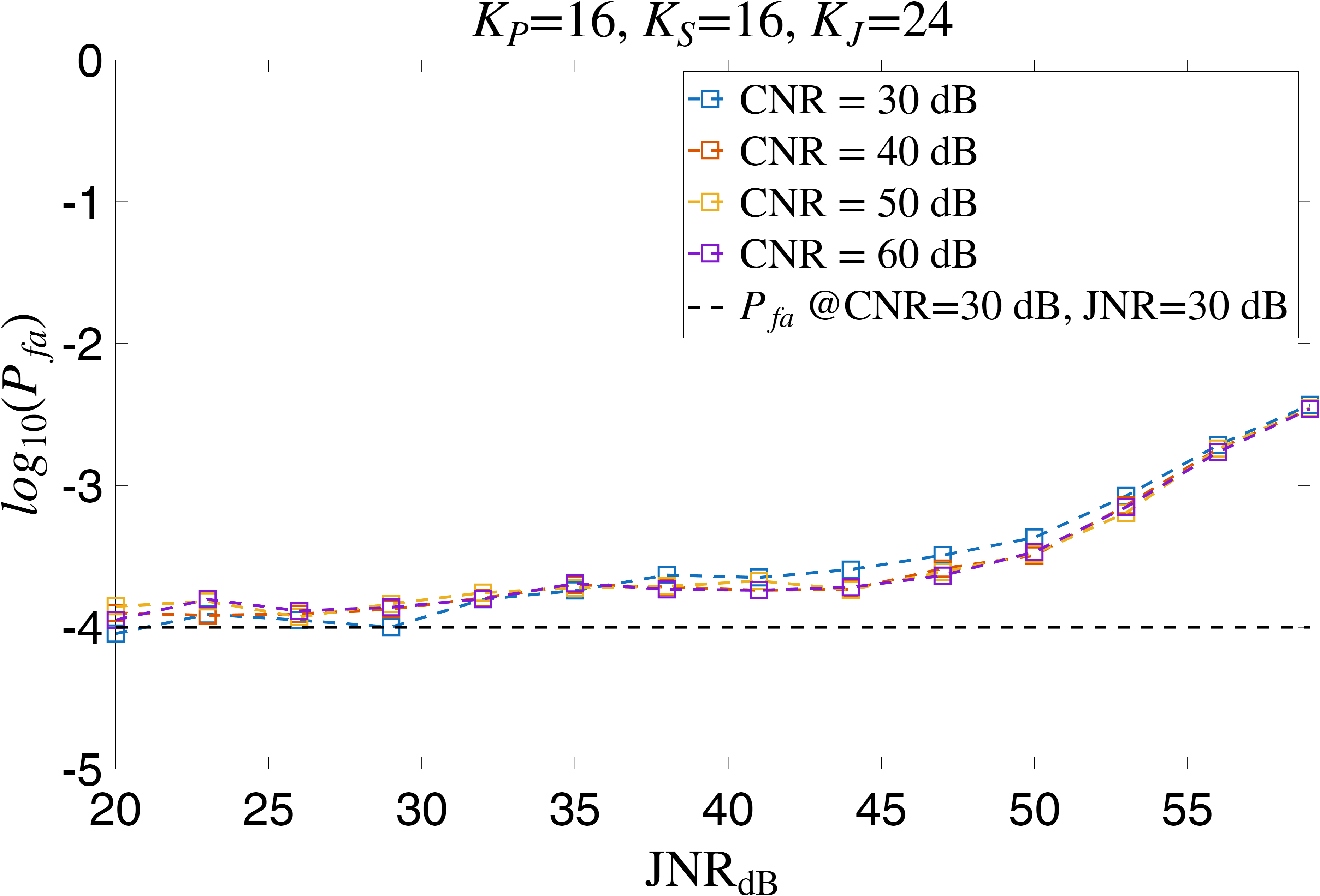}
        \caption{}
        \label{fig:e}
    \end{subfigure}

    \caption{Actual $P_{fa}$ as a function of JNR for $\mbox{CNR}\in\{30,40,50,60\}$ dB
    and five data size configurations: $(K_P,K_S,K_J)=(16,12,16)$ (a), 
    $(16,16,16)$ (b), $(16,24,16)$ (c), $(16,16,12)$ (d), and $(16,16,24)$ (e). 
    For each configuration, the threshold is set to provide the nominal 
    $P_{fa}$ ($10^{-4}$) at the reference operating points $\mbox{CNR}=\mbox{JNR}=30$ dB. 
    The horizontal dashed line denotes the nominal $P_{fa}$.}
    \label{fig:PFA_analysis}
\end{figure*}

\subsection{False-Alarm Analysis}
We first investigate the sensitivity of the SOJ-LRT $P_{fa}$ 
to the clutter and jammer power variations when a given detection threshold is employed. 
Since the distribution of the test statistic also depends on the number of primary ($K_P$), 
secondary ($K_S$), and additional training data ($K_J$), the analysis is 
conducted separately for different triplets $(K_P,K_S,K_J)$.

For each considered triplet, the detection threshold $\eta$ is set
so as to obtain the nominal $P_{fa}$ ($10^{-4}$) with the parameters 
CNR and JNR both set to the reference value $30$ dB. The same threshold 
is used to estimate the actual $P_{fa}$ when data are 
generated (under the $H_0$ hypothesis) by varying the
values of CNR and JNR. Specifically, the CNR takes on values in $\{30,40,50,60\}$ dB 
whereas the JNR values range from $20$ to $59$ dB in steps of $3$ dB. 

Five configurations for $(K_P,K_S,K_J)$ are considered: $(16,12,16)$, 
$(16,16,16)$, $(16,24,16)$, $(16,16,12)$, $(16,16,24)$. Specifically, $K_P$ is equal 
to $16$, while $K_S$ and $K_J$ take on values in $\{12,16,24\}$ in order to assess the 
dependence on the number of secondary ($\bZ_S$) and additional ($\bX$) data.

Figure \ref{fig:PFA_analysis} shows the obtained results, where for each 
triplet the $P_{fa}$ is plotted as a function of JNR and for the different values of CNR.
The figure shows that the SOJ-LRT is not inherently CFAR with respect to both JNR and CNR. 
In fact, we observe a pronounced dependence of $P_{fa}$ on the JNR, with deviations 
exceeding one order of magnitude at high jammer powers. Conversely, 
the CNR dependence is generally weak and becomes 
noticeable mainly in the sample-starved case ($K_S=12<N$) and 
for low values of CNR, when clutter covariance estimation is less accurate. 

To overcome this limitation, we introduce an adaptive strategy for 
detection threshold setting so as to keep $P_{fa}$ approximately 
constant as the JNR varies. Specifically, for each triplet $(K_P,K_S,K_J)$
and for the CNR reference value of $40$ dB, a lookup 
table (LUT) is constructed by varying the JNR from $20$ to $59$ dB in $3$ dB increments. 
The entries of this LUT are the thresholds (estimated over $100/P_{fa}$ independent
MC trials) corresponding to the nominal $P_{fa}=10^{-4}$ and to each value of JNR.
Then, during the detection stage, the adaptive 
threshold $\widehat{\eta}$ is obtained by linearly interpolating 
the LUT at the estimated JNR ($\widehat{\mathrm{JNR}}$).
The latter is obtained by exploiting intermediate 
estimates already provided by the procedure described 
in Subsection \ref{Subsec:LRTdesign} 
from the additional data $\bX$. Such estimates are used 
to compute $\widehat{\mathrm{JNR}}$ 
as\footnote{Notice that in the $i$th range bin, the jammer 
component is $\alpha_{J,i}\bv(\phi_{J,i})$ and approximately belongs to the subspace $\bJ$ (that is a slice
of unitary matrix).
It follows that
\be
\alpha_{J,i}\bv(\phi_{J,i})\approx\bJ\bq_i,
\ee
where $\bq_i$ is a random vector with zero mean and covariance matrix $\bQ$.
The covariance matrices of both vectors can be written as
\be
\sigma^2_J E[\bv(\phi_{J,i})\bv(\phi_{J,i})^\dag] \approx \bJ\bQ\bJ^\dag,
\ee
where $E[\cdot]$ denotes the statistical expectation. Computing the trace of both sides of
the above equation leads to
\be
\sigma^2_J \approx \tr[\bQ].
\ee
}
\be
\label{eqn:JNR_estimation}
\widehat{\mathrm{JNR}}
=
\frac{\operatorname{Tr}(\widehat{\boldsymbol{Q}})}
     {\widehat{\sigma}_n^2}.
\ee
The effectiveness of the adaptive threshold-selection strategy 
is assessed for $\mbox{CNR}\in\{35,45,55\}$ dB and also including JNR values not 
belonging to the LUT, that is $\mbox{JNR}\in\{22.5,33.5,37.5,45.5,52.5,56.5\}$ dB. 
The resulting false alarm probabilities 
are reported in Figure \ref{fig:PFA_analysis_adapt_th}.
The curves of $P_{fa}$ refer to all five triplets $(K_P,K_S,K_J)$ considered 
in Figure \ref{fig:PFA_analysis}. Moreover, Figure \ref{fig:PFA_adapt_CNR35} 
reports the $P_{fa}$ values obtained at $\mbox{CNR}=35$ dB for two sets 
of JNR values: the LUT grid points, ranging from $20$ to $59$ dB in $3$ dB increments, 
and the additional off-grid points. Figures \ref{fig:PFA_adapt_CNR45} and 
\ref{fig:PFA_adapt_CNR55} show analogous results except for $\mbox{CNR}=45$ dB
and $\mbox{CNR}=55$ dB, respectively. The figures
clearly highlight that the adaptive threshold-selection 
strategy approximately maintains $P_{fa}$ at its nominal value 
over the considered CNR and JNR ranges and for all the 
examined triplets $(K_P,K_S,K_J)$. Slightly larger deviations from 
the nominal $P_{fa}$ are observed for the sample-starved configuration $K_S=12<N$. 
This behavior is consistent with the lower accuracy of the clutter covariance 
estimate and with the residual CNR dependence that is not 
explicitly accounted for by the one-dimensional LUT. 
Nevertheless, the deviations remain substantially smaller than those observed when a fixed threshold is employed. 

\bigskip

\begin{figure*}[t]
    \centering
    \begin{subfigure}[t]{0.32\textwidth}
        \centering
        \includegraphics[width=\linewidth]
        {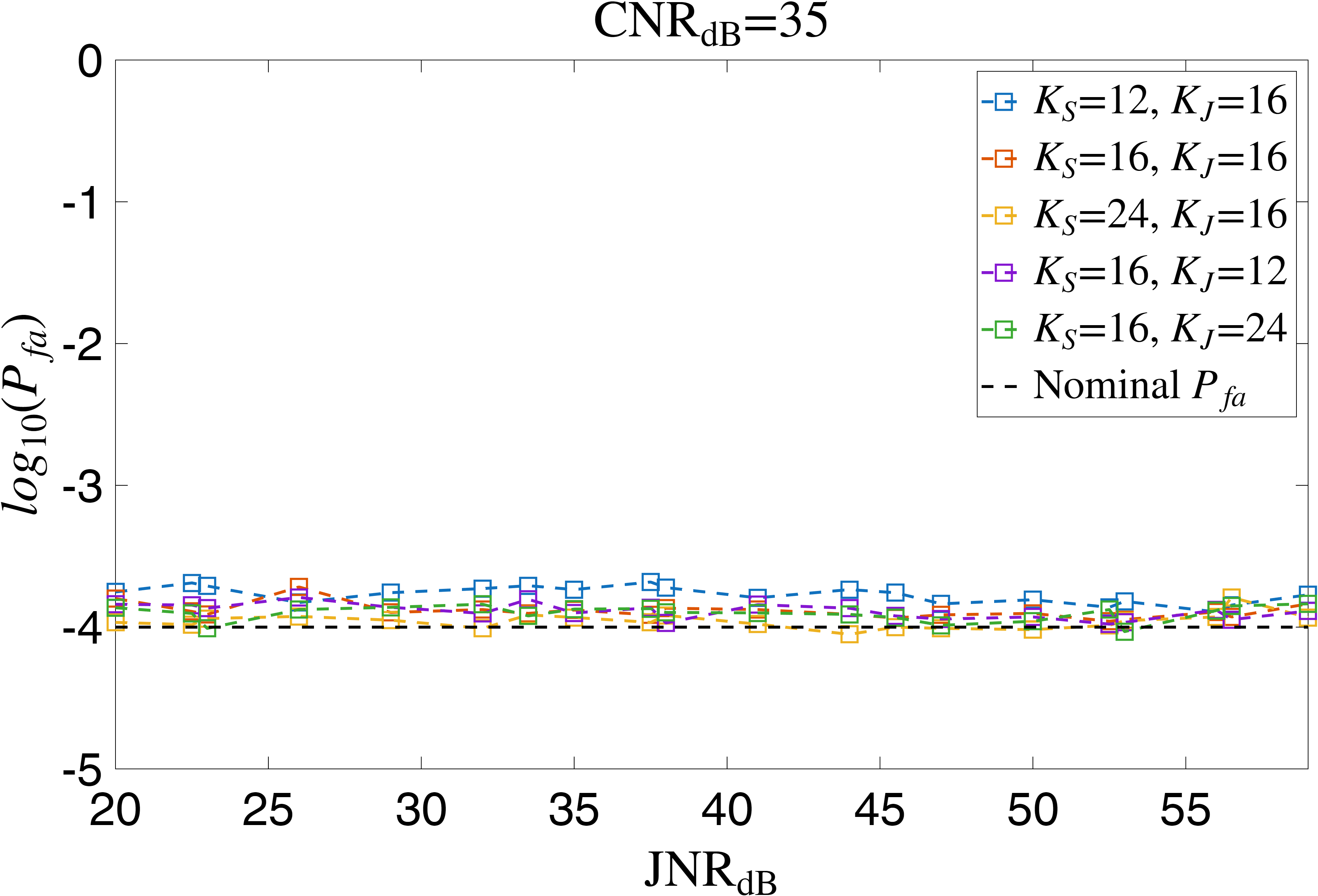}
        \caption{}
        \label{fig:PFA_adapt_CNR35}
    \end{subfigure}
    \hfill
    \begin{subfigure}[t]{0.32\textwidth}
        \centering
        \includegraphics[width=\linewidth]
        {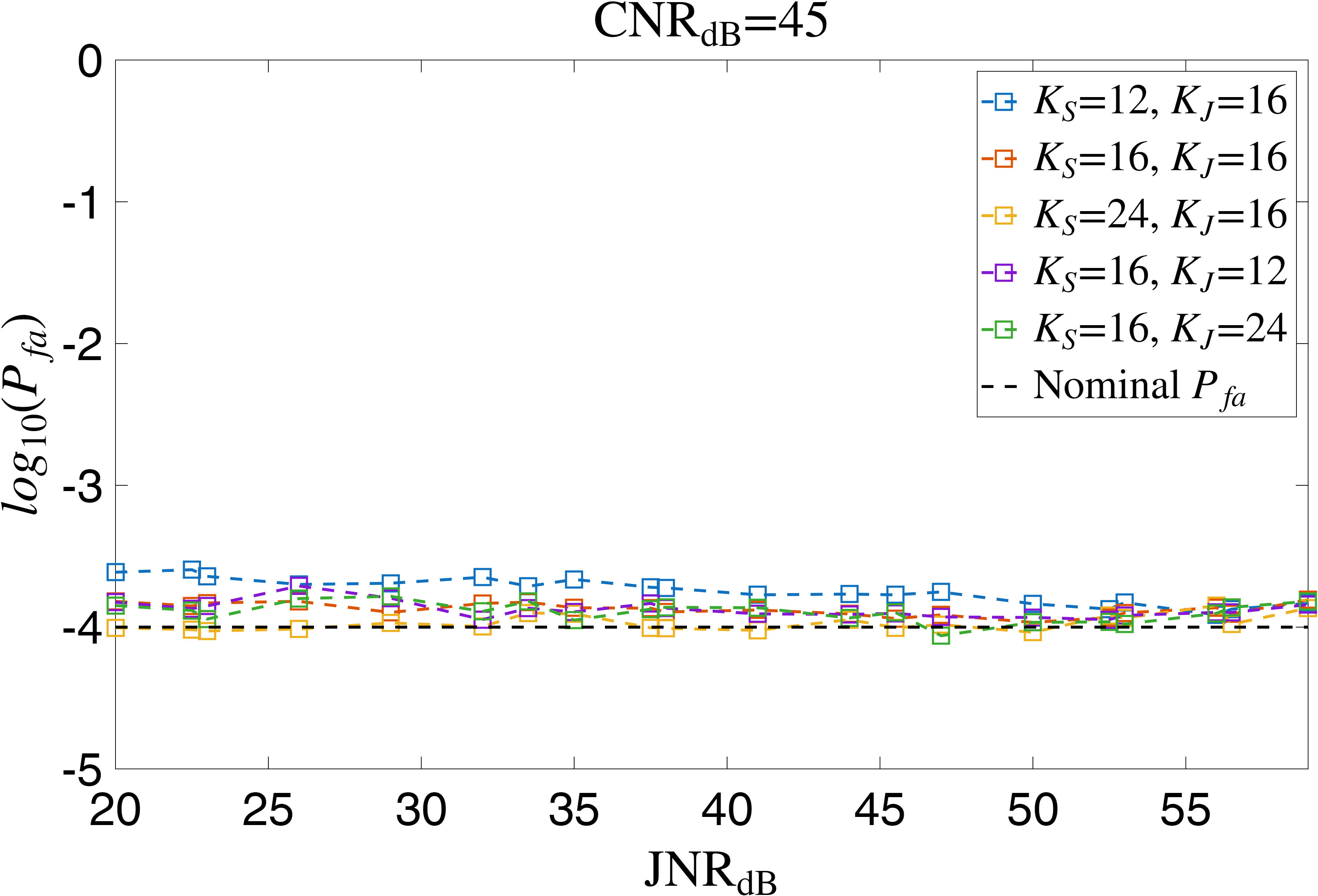}
        \caption{}
        \label{fig:PFA_adapt_CNR45}
    \end{subfigure}
    \hfill
    \begin{subfigure}[t]{0.32\textwidth}
        \centering
        \includegraphics[width=\linewidth]
        {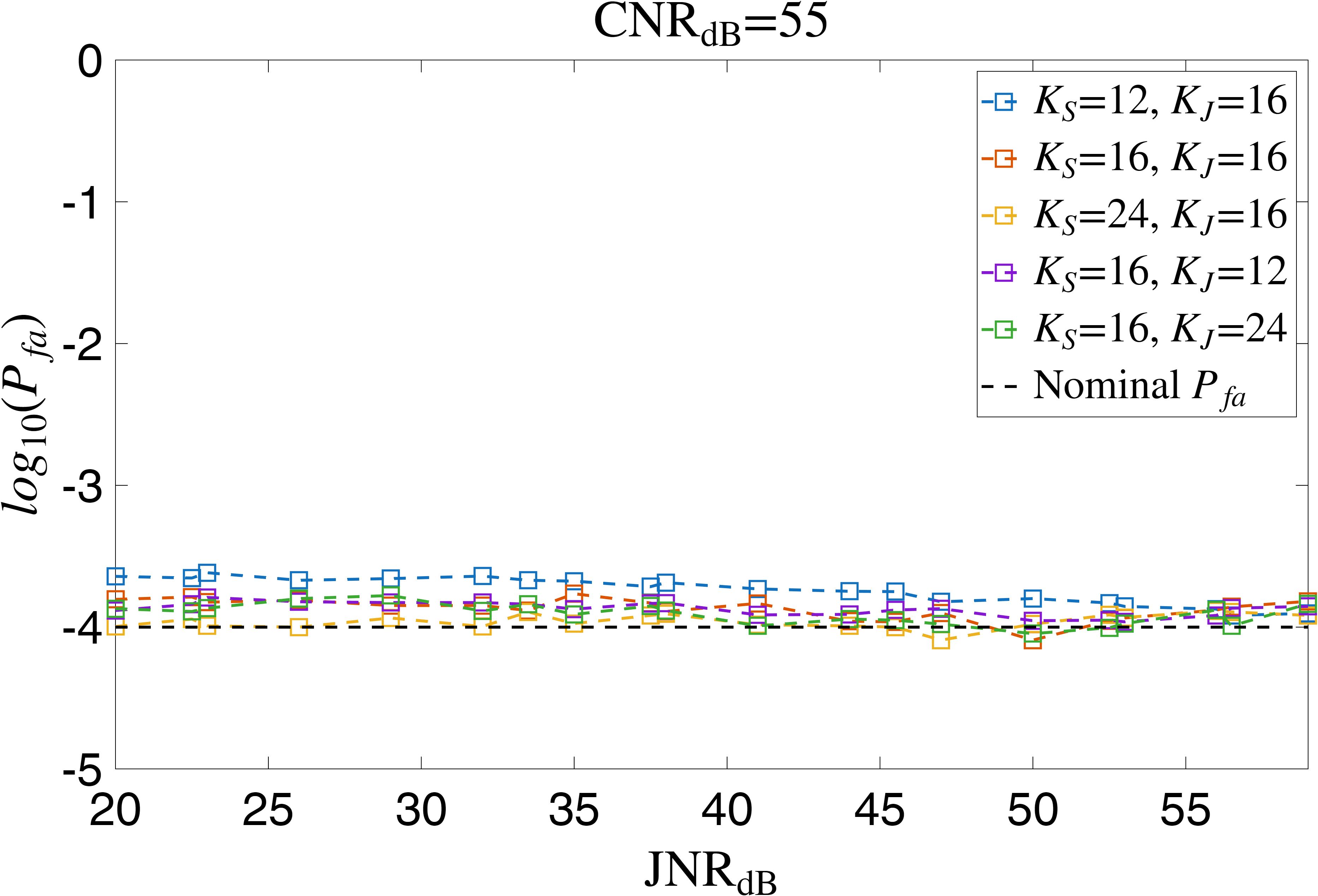}
        \caption{}
        \label{fig:PFA_adapt_CNR55}
    \end{subfigure}

    \caption{Actual $P_{fa}$ as a function of JNR obtained by using the 
    adaptive threshold-selection strategy for five data size configurations, namely
    $(K_P,K_S,K_J)=(16,12,16)$, $(16,16,16)$, $(16,24,16)$, $(16,16,12)$ and $(16,16,24)$, 
    and three different values of CNR: $35$ dB (a), $45$ dB (b), $55$ dB (c).}
    \label{fig:PFA_analysis_adapt_th}
\end{figure*}

\subsection{Detection Performance}
In this section, we investigate the detection capabilities of SOJ-LRT in
comparison with the considered competitor (\ref{eqn:Competitor}). 
As for the false-alarm analysis, $K_P$ is set to $16$, whereas 
$K_S$ and $K_J$ are independently varied over $\{12,16,24\}$ 
to assess the dependence of $P_d$ on the number of secondary ($\bZ_S$) and additional ($\bX$) data. 
As to the SOJ-LRT detector, in the following, we focus on its implementation 
based on the adaptive threshold-selection strategy, since it can be used
in practical scenarios without requiring knowledge of 
the actual CNR and JNR values.

Figure \ref{fig:Pd_vari_Ks} compares the SOJ-LRT with the competitor
for $\mbox{CNR}=30$ dB and $\mbox{JNR}=40$ dB. The figure assumes $K_P=16$ and $K_J=16$ and the $P_d$ curves
are compared by varying $K_S\in{[12, 16, 24]}$. 
It is worth highlighting that the competitor is not reported 
the sample starved scenario $K_S=12<N$ where its test statistics cannot be evaluated.
The observed results show that, for the considered parameter values 
the SOJ-LRT significantly outperforms the competitor. 
Furthermore, both detectors benefit from an increase in the number of 
secondary data, which enables a more accurate estimation 
of the clutter covariance matrix and shifts the detection curves 
toward lower SNR values. This improvement is more pronounced 
for the competitor; consequently, the performance gap 
between the two detectors decreases as $K_S$ increases. 
Nevertheless, the SOJ-LRT retains a detection advantage even 
for $K_S=24$. Finally, unlike the competitor, the SOJ-LRT can also 
operate for low volumes of secondary data.

\begin{figure}[t]
    \centering
    \includegraphics[width=\columnwidth]{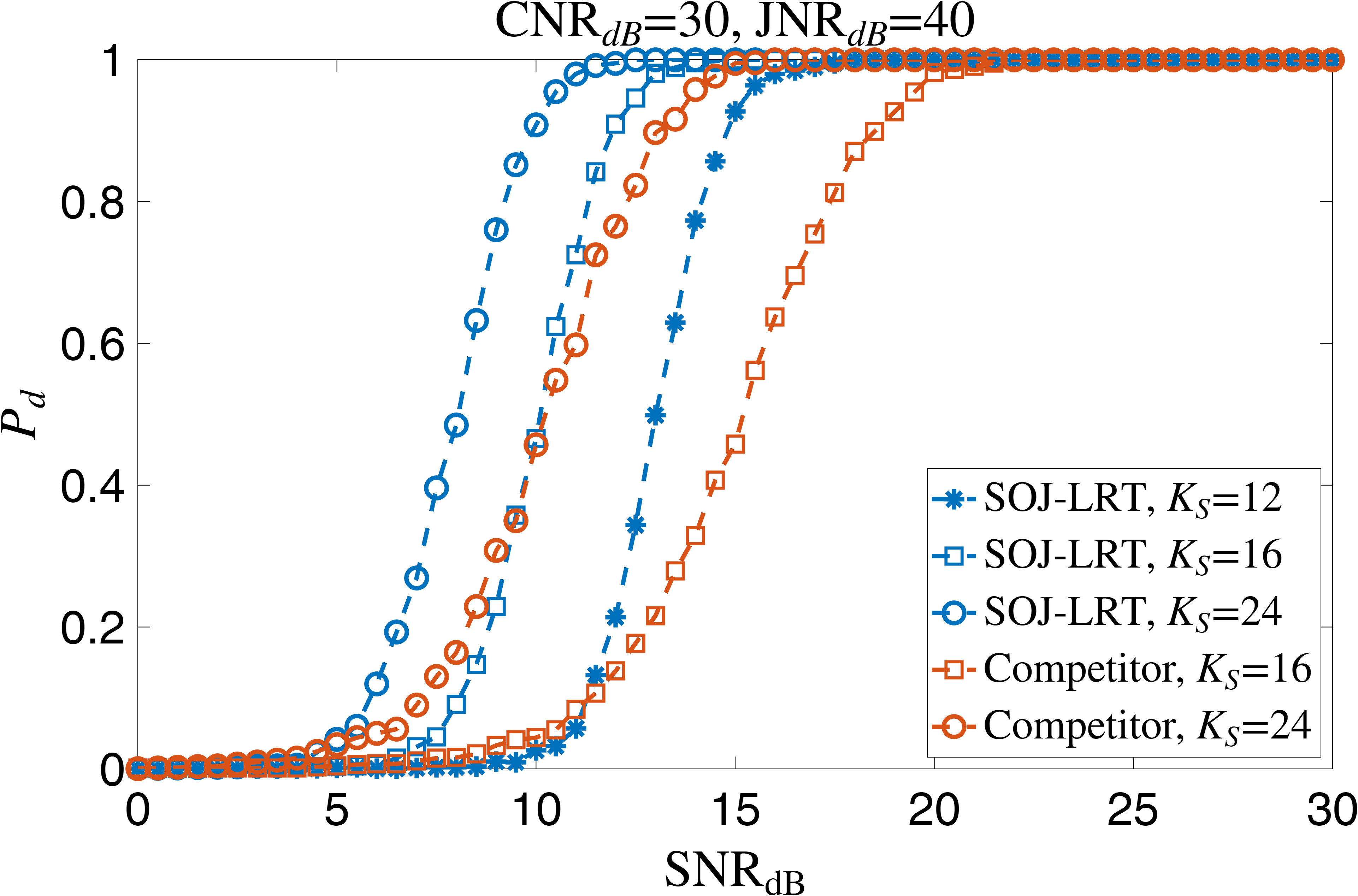}
    \caption{$P_d$ versus SNR for the SOJ-LRT and the considered competitor
    assuming different values of $K_S$, $\mbox{CNR}=30$ dB, $\mbox{JNR}=40$ dB, $K_P=16$, and $K_J=16$.}
    \label{fig:Pd_vari_Ks}
\end{figure}

Figure \ref{fig:Pd_vari_JNR} compares the two detection architectures
for three different JNR values, namely $30$, $40$, and $50$ dB. In all cases, 
the curves are obtained assuming $\mbox{CNR}=30$ dB, $K_P=16$, and $K_S=24$. 
For the SOJ-LRT, $K_J$ is set to $12$ in Figure \ref{fig:Vari_JNR_CNR_30_Kp_16_Ks_24_Kj_12}, 
to $16$ in Figure \ref{fig:Vari_JNR_CNR_30_Kp_16_Ks_24_Kj_16}, and to 
$24$ in Figure \ref{fig:Vari_JNR_CNR_30_Kp_16_Ks_24_Kj_24}. 
As for $K_S$, it is set to $24$ because such a value is
the most favorable for the competitor among those considered in this analysis, 
thus providing a conservative comparison with the SOJ-LRT.
The results in Figure \ref{fig:Pd_vari_JNR} show that the SOJ-LRT generally 
outperforms the competitor. However, we notice that 
the performance of SOJ-LRT decreases as JNR increases, whereas that 
of the competitor seems insensitive to the JNR. 
This behavior is explained by the projection of the primary data onto the orthogonal 
complement of the jammer subspace making the competitor essentially robust to 
the jammer power variations.  
Finally, we points out that SOJ-LRT benefits from the increment 
in the number of additional data ($K_J$). This behavior depends on 
the improved accuracy in the estimates of the noise power and the 
jammer covariance matrix that is obtained by increasing the 
number of additional data. This performance improvement is more 
significant at the higher JNR values $40$ and $50$ dB indicating that
a large number of additional data is required against strong jamming.

\begin{figure*}[t]
    \centering
    \begin{subfigure}[t]{0.32\textwidth}
        \centering
        \includegraphics[width=\linewidth]
        {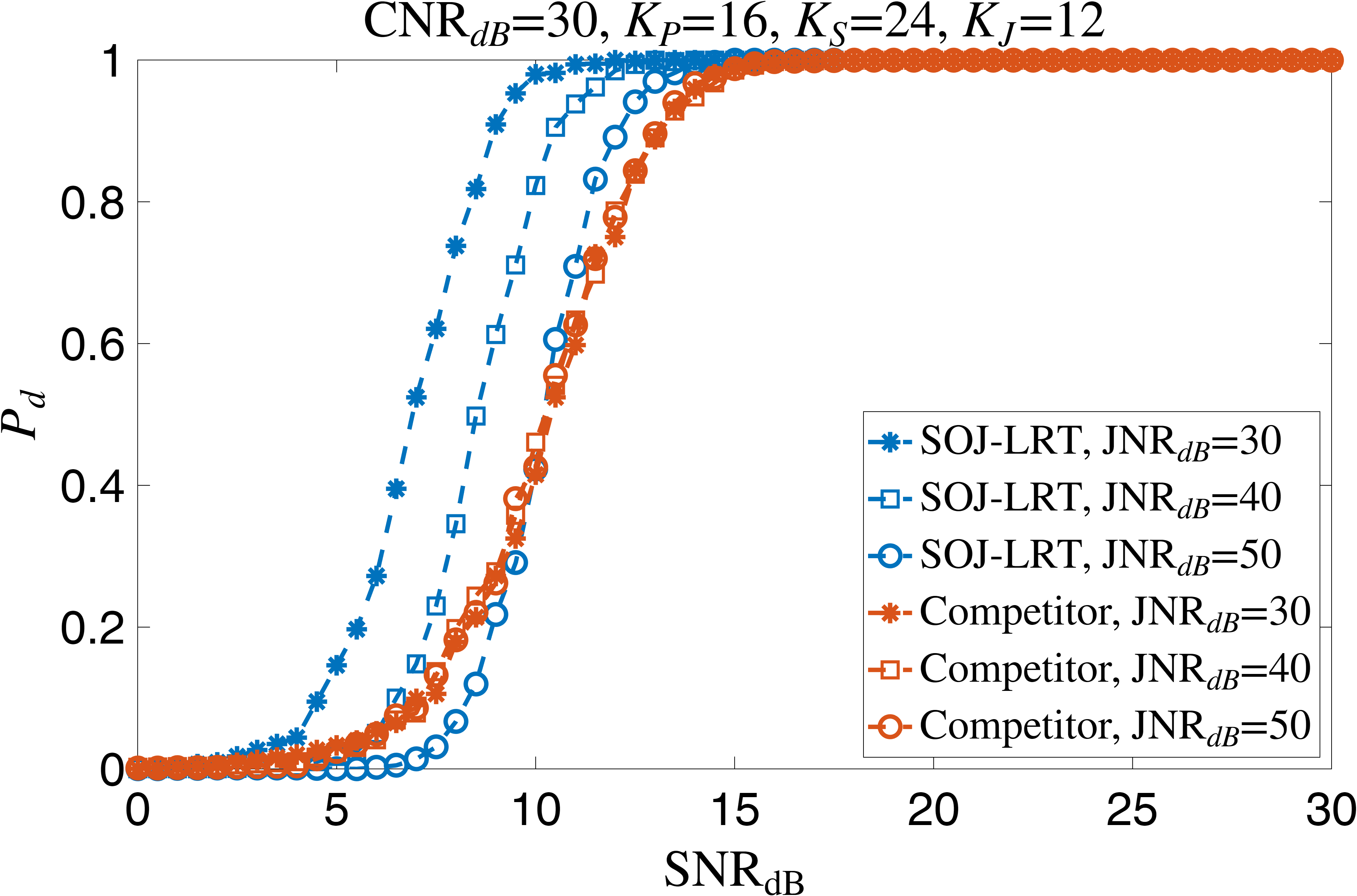}
        \caption{}
        \label{fig:Vari_JNR_CNR_30_Kp_16_Ks_24_Kj_12}
    \end{subfigure}
    \hfill
    \begin{subfigure}[t]{0.32\textwidth}
        \centering
        \includegraphics[width=\linewidth]
        {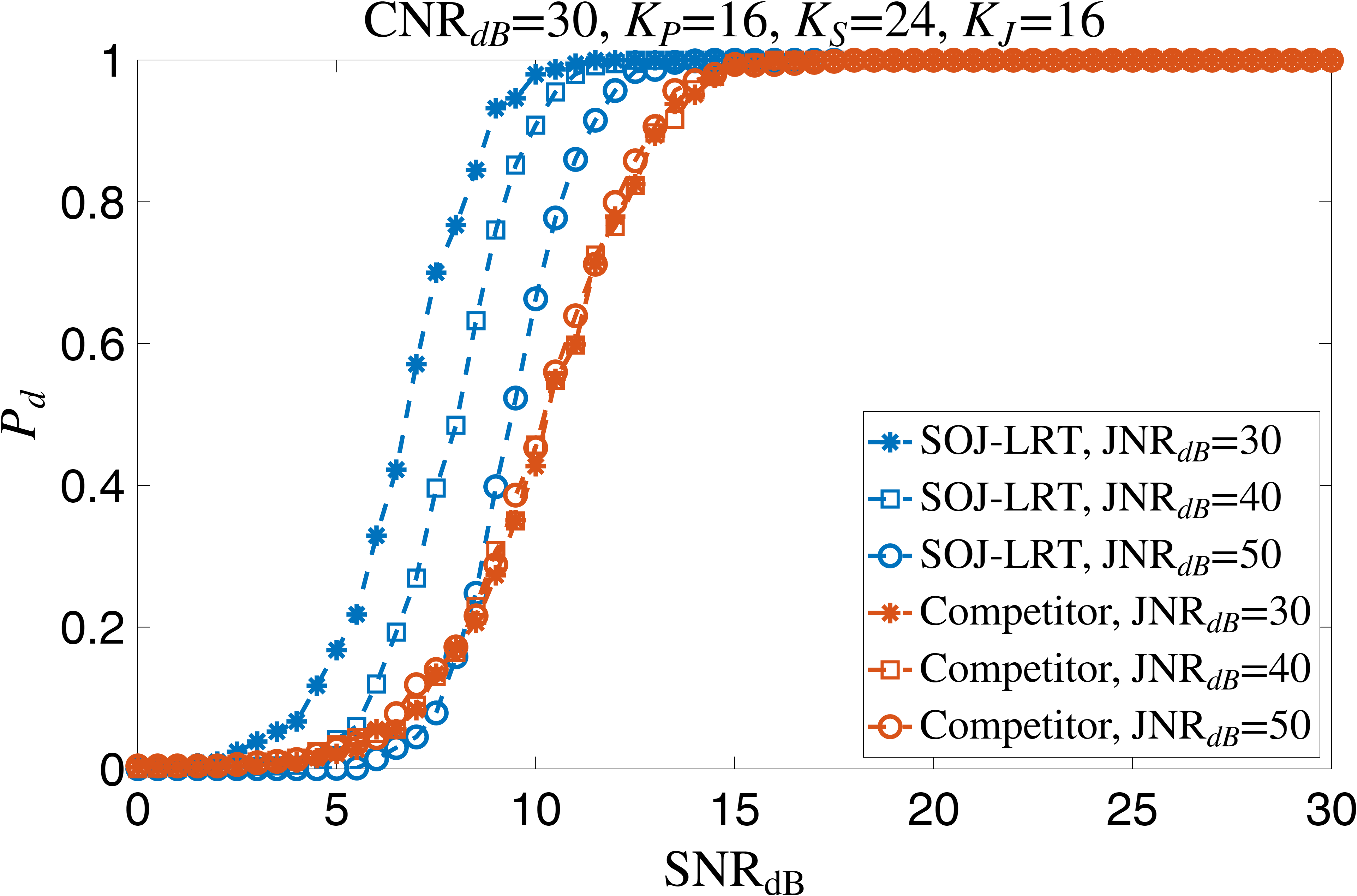}
        \caption{}
        \label{fig:Vari_JNR_CNR_30_Kp_16_Ks_24_Kj_16}
    \end{subfigure}
    \hfill
    \begin{subfigure}[t]{0.32\textwidth}
        \centering
        \includegraphics[width=\linewidth]
        {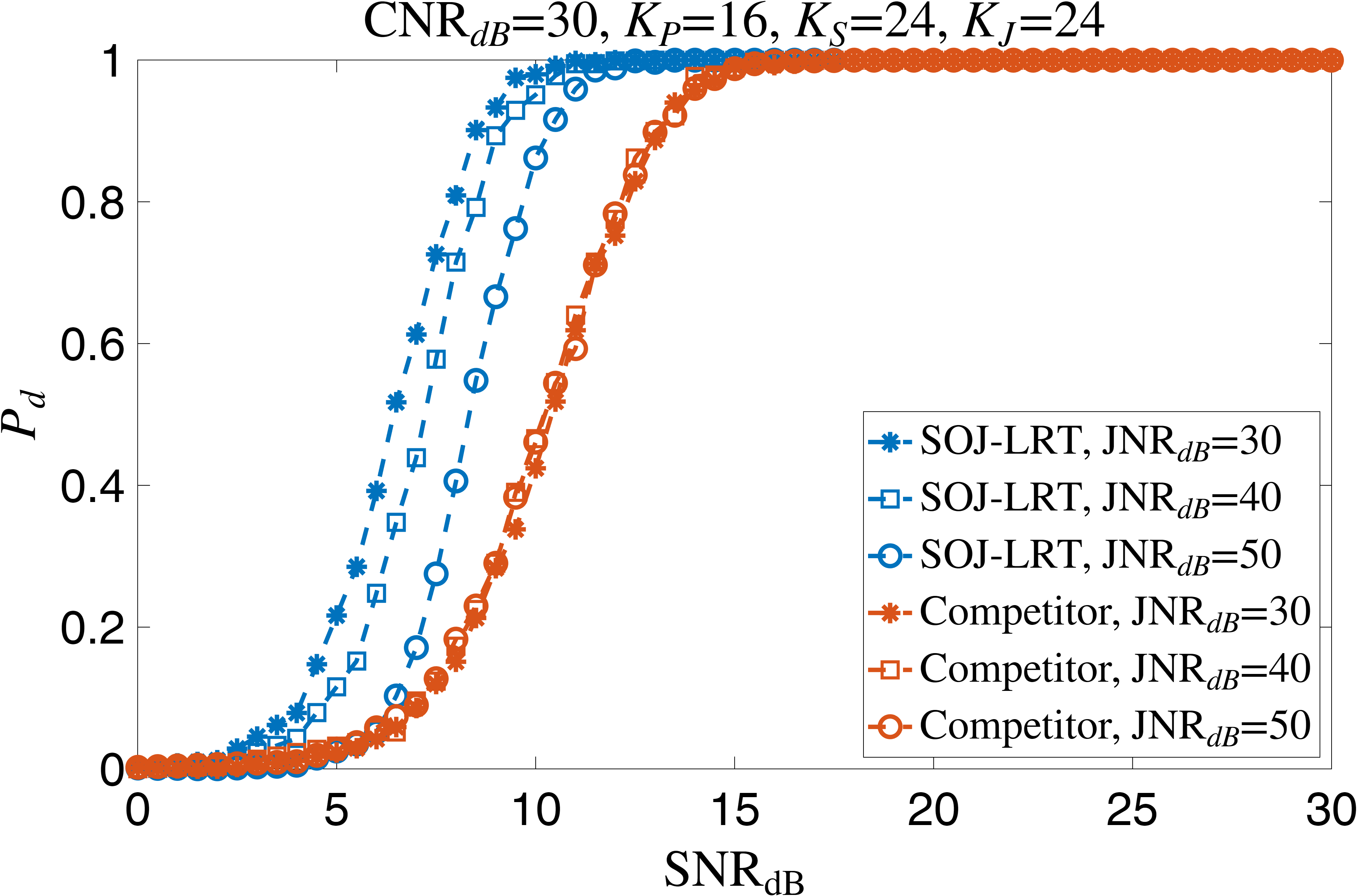}
        \caption{}
        \label{fig:Vari_JNR_CNR_30_Kp_16_Ks_24_Kj_24}
    \end{subfigure}

    \caption{Detection performance of the detectors for JNR$\in{[30,40,50]}$ $dB$, CNR=$30$ $dB$, $K_P=16$, $K_S=24$ and $K_J=12$ in (a), $K_J=16$ in (b), $K_J=24$ in (c).}
    \label{fig:Pd_vari_JNR}
\end{figure*}

\section{Conclusions}
\label{Sec:Conclusions}
This paper presented a subspace-based decision scheme to detect  range-spread targets  
in presence of thermal noise, clutter, and intentional interference. 
To the best of authors' knowledge, this is a first attempt to model the intentional interference, 
in addition to the target components, in terms of a second-order model, where 
both the target and jammer components are embedded into the covariance matrices of the received data.
As customary, secondary data are used to estimate the clutter covariance matrix;
an additional set of training
data, acquired by the system in passive mode, is used instead  to estimate
the jammer parameters. 
In fact, the jammer is present
in the secondary data vectors in addition to the clutter while data 
collected in passive mode are free of target and clutter components.
More importantly, the use of this additional data set is necessary
to solve an identifiability issue arising when the jammer and clutter covariance components 
are jointly estimated from the secondary data alone.

Based on this formulation, a detection scheme was derived exploiting
the LRT, where the unknown second-order parameters are suitably estimated by means
of a ``sequential'' maximum likelihood approach.
A unified theoretical framework was established to come up with 
closed-form second-order parameter estimates. This framework also provides a geometrical 
interpretation of the resulting estimation procedures.

Since
such architecture has a dependence of $P_{fa}$ on the JNR, an adaptive strategy for 
detection threshold setting is introduced so as to keep $P_{fa}$ approximately 
constant as the JNR varies. More precisely, we resorted to an adaptive 
threshold computed  by linearly interpolating 
the  LUT at the available values of the  estimated JNR. 
The performance assessment, conducted also in comparison to a natural (first-order) competitor,  
shows the effectiveness of the proposed solution that is able to 
outperform its competitor and works also under sample-starved scenarios.

Future research tracks might consider the extension of this approach
to heterogeneous environments where clutter in primary and secondary data
are characterized by different power levels. Another interesting point is the combination
of the first-order (representative of coherent jammers) and second-order jamming components.

\bibliographystyle{IEEEtran}
\bibliography{group_bib}

\end{document}

%% file: Defs2.tex
\newcommand{\bGamma}{\mbox{\boldmath{$\Gamma$}}}

\newcommand{\bbm}{\begin{bmatrix}}
\newcommand{\ebm}{\end{bmatrix}}

\newcommand{\bit}{\begin{itemize}}
\newcommand{\eit}{\end{itemize}}

\newcommand{\ben}{\begin{enumerate}}
\newcommand{\een}{\end{enumerate}}

\newcommand{\bdesc}{\begin{description}}
\newcommand{\edesc}{\end{description}}

\newcommand{\bea}{\begin{array}}
\newcommand{\eea}{\end{array}}

\newcommand{\tr}{\mbox{\rm Tr}\, }

\DeclareMathOperator{\sinc}{sinc}

\newcommand{\beqa}{\begin{eqnarray}}
\newcommand{\eeqa}{\end{eqnarray}}

\newcommand{\Comment}[1]{}

\newtheorem{prop}{Proposition}
\newtheorem{corollary}{Corollary}[prop]

\def\C{{\mathds C}}

\def\cC{\mbox{$\CMcal C$}}

\def\cL{\mbox{$\mathcal L$}}
\def\cN{\mbox{$\CMcal N$}}

\newcommand{\be}{\begin{equation}}

\newcommand{\ee}{\end{equation}}

\newcommand{\bzero}{{\mbox{\boldmath $0$}}}

\newcommand{\bm}{{\mbox{\boldmath $m$}}}

\newcommand{\bv}{{\mbox{\boldmath $v$}}}
\newcommand{\bx}{{\mbox{\boldmath $x$}}}

\newcommand{\bq}{{\mbox{\boldmath $q$}}}
\newcommand{\bz}{{\mbox{\boldmath $z$}}}

\newcommand{\bA}{{\mbox{\boldmath $A$}}}
\newcommand{\bB}{{\mbox{\boldmath $B$}}}
\newcommand{\bC}{{\mbox{\boldmath $C$}}}

\newcommand{\bE}{{\mbox{\boldmath $E$}}}
\newcommand{\bF}{{\mbox{\boldmath $F$}}}
\newcommand{\bG}{{\mbox{\boldmath $G$}}}
\newcommand{\bH}{{\mbox{\boldmath $H$}}}
\newcommand{\bI}{{\mbox{\boldmath $I$}}}
\newcommand{\bJ}{{\mbox{\boldmath $J$}}}

\newcommand{\bM}{{\mbox{\boldmath $M$}}}
\newcommand{\bL}{{\mbox{\boldmath $L$}}}

\newcommand{\bP}{{\mbox{\boldmath $P$}}}
\newcommand{\bQ}{{\mbox{\boldmath $Q$}}}
\newcommand{\bR}{{\mbox{\boldmath $R$}}}
\newcommand{\bS}{{\mbox{\boldmath $S$}}}
\newcommand{\bT}{{\mbox{\boldmath $T$}}}
\newcommand{\bU}{{\mbox{\boldmath $U$}}}
\newcommand{\bV}{{\mbox{\boldmath $V$}}}
\newcommand{\bX}{{\mbox{\boldmath $X$}}}
\newcommand{\bY}{{\mbox{\boldmath $Y$}}}
\newcommand{\bW}{{\mbox{\boldmath $W$}}}
\newcommand{\bZ}{{\mbox{\boldmath $Z$}}}

\newcommand{\diag}{\mbox{\rm diag}\, }

\newcommand{\bLambda}{{\mbox{\boldmath $\Lambda$}}}

\newcommand{\bSigma}{{\mbox{\boldmath $\Sigma$}}}

\newcommand{\test}{\mbox{$
\begin{array}{c}
\stackrel{ \stackrel{\textstyle H_1}{\textstyle >} }{
\stackrel{\textstyle <}{\textstyle H_0} }
\end{array}
$}}